\setlist[enumerate]{label= \alph*)}
\newtheorem{thm}{Theorem}[section]
\newtheorem{lemma}[thm]{Lemma}
\newtheorem{cor}[thm]{Corollary}
\newtheorem{prop}[thm]{Proposition}
\newtheorem{condition}[thm]{Condition}
\theoremstyle{definition}
\theoremstyle{remark}
\DeclarePairedDelimiterX{\pair}[2]{\langle}{\rangle}{#1,#2}
\DeclareMathOperator*{\wslim}{w^\star-lim}
\newcommand{\supp}{\operatorname{supp}}
\newcommand{\overbar}[1]{\mkern 1.5mu\overline{\mkern-1.5mu#1\mkern-1.5mu}\mkern 1.5mu}
\newcommand{\R}{\mathbb{R}}
\newcommand{\N}{\mathbb{N}}
\newcommand{\B}{\mathcal{B}}
\newcommand{\sch}[1]{\langle #1 \rangle}
\newcommand{\eps}{\varepsilon}
\renewcommand{\Im}{\operatorname{Im}}
\renewcommand{\Re}{\operatorname{Re}}
\date{\today}
	\def\MR#1{}
\numberwithin{equation}{section}
\begin{document}
\begin{abstract}
	We show Rellich's theorem, the limiting absorption principle, and a Sommerfeld uniqueness result for a wide class of one-body Schrödinger operators with long-range potentials, extending and refining previously known results. Our general method is based on elementary commutator estimates, largely following the scheme developed recently by Ito and Skibsted.
\end{abstract}

\title[LAP and radiation conditions]{Limiting absorption principle and radiation conditions for Schrödinger operators with long-range potentials}

\author{Martin Dam Larsen}
\address{Department of Mathematics\\
	University of Copenhagen\\
	Universitetsparken 5\\
	DK-2100 Copenhagen Ø\\
	Denmark}
\email{mdl@math.ku.dk}

\maketitle
\tableofcontents

\section{Introduction}
We shall study the one-body Schrödinger operator
\[
H = H_0 + V, \quad H_0 = p^*p = -\Delta, \quad p = -i \nabla
\]
on $L^2(\R^d)$, $d\geq 1$, for a real bounded potential $V$ under natural minimal decay assumptions. We employ the notation
\[
r = r(x) = \sch{x} = \big(1+\abs{x}^2\big)^{1/2}, \quad \omega = \nabla r = \frac{x}{r},
\]
and use the convention that big- and small-O notation include a local boundedness assumption. We consider the following decay conditions on the potential $V$.
\begin{condition}\thlabel{cond1}
There is a radial function $W_0 = W_0(r)\geq 0$ and a splitting $V = V^{sr} + V^{lr}$ into real bounded functions such that:
\begin{enumerate}
	\item $W_0\in L^1([1,\infty),dr) $ and $W_0 = o(r^{-1})$ as $\abs{x} \to \infty$.
	\item $V^{lr}$ has first order distributional derivatives in $L^1_{loc}$, and 
	\[
	V^{lr} = o(1) = o(r^0) \text{ as } \abs{x}\to \infty, \quad \omega \cdot \nabla V^{lr} \leq W_0.
	\]
	\item $V^{sr}$ satisfies the short-range condition
	\[
	\abs{V^{sr}}\leq W_0.
	\]
\end{enumerate}
\end{condition}

\begin{condition}\thlabel{cond2}
In addition to \thref{cond1} the long-range part satisfies $\abs{\nabla V^{lr}} \leq W_0$.
\end{condition}
The requirement $V = o(1)$ ensures that $V$ is relatively $H_0$-compact, whence $H = H_0 + V$ defines a self-adjoint operator with domain $D(H) = D(H_0) = H^2(\R^d)$ and essential spectrum $\sigma_{ess}(H) = [0,\infty)$. We shall study the strictly positive spectrum of $H$. To this end, \thref{cond1,cond2} are very natural; the $o(r^{-1})$ condition exactly excludes positive point spectrum, and an $L^1$ assumption naturally appears in the one-dimensional (or radial) scattering problem.

The main goal of the present paper is to establish the limiting absorption principle in the optimal Besov setting and uniquely characterize the limiting resolvents as the solutions to the inhomogeneous Helmholtz equation that satisfy incoming/outgoing radiation conditions. To archive this, we establish a series of 'sharp' spectral results, namely Rellich's theorem (cf.\ \cite{Rellich43,Kato59,Agmon70}) on the absence of a certain class of slowly decaying eigenfunctions, and a number of 'smoothness bounds' (LAP bounds) on \emph{individual} resolvents in the optimal Besov setting. These results are interesting in their own right, and, together with the LAP, serves as key ingredients in the development of stationary scattering theory. We remark that all of our theorems are well known under the classical and extensively studied long-range condition $W_0 = O(r^{-1-\eps})$. See for instance \cite{Lavine73,AgmonHormander,Hormander4,Isozaki80,Saito79,Tao} and the references therein. As such, the achievement of the present paper is to extend a series of 'sharp' results, which usually require $O(r^{-1-\eps})$ decay in the literature, to decay dictated by \thref{cond1,cond2}. In particular, we cover the interesting case $W_0 = O(r^{-1}(\log r)^{-1-\eps})$. The formulations of our theorems should be compared to that of \cite{ItoSkibsted20} which served as an inspiration. 

Decay dictated by \thref{cond2} have previously been studied by Vakulenko \cite{Vakulenko} in the short-range case and Yafaev \cite[Chapter 11]{Yafaev_analytic} in the full long-range case. By exhibiting a carefully constructed \emph{strictly} positive commutator, they were able to establish the absence of positive eigenvalues as well as the following smoothness bound: For any compact set $I\subseteq (0,\infty)$, 
\[
\sup_{\lambda \in I, \, \eps >0} \norm{W_0^{1/2}\delta_\eps(H-\lambda)W_0^{1/2}} < \infty,
\]
where 
\[
\delta_\eps(H-\lambda) = \tfrac{1}{2\pi i} \big(R(\lambda+i\eps) - R(\lambda - i \eps)\big),\quad R(z) = (H-z)^{-1}.
\]
As remarked by Yafaev, their commutator method apparently only allows two-sided uniform estimates on $\delta_\eps$ (as opposed to individual resolvents). We remedy this situation in \thref{unif_bound,W_resolvent} stated below, while also extending the results to include the full LAP with radiation conditions. Finally, our version of Rellich's theorem (\thref{Rellich}) is a sharpening of the corresponding result on positive eigenvalues from \cite{Yafaev_analytic}. 

All of our main results follow from elementary commutator estimates requiring little more than integration by parts, largely based on the general scheme developed in \cite{ItoSkibsted20} by Ito and Skibsted. They study the limiting absorption principle and radiation conditions under 'classical' $O(r^{-1-\eps})$ long- and short-range conditions, although in a highly geometric setting. We trim away all unnecessary geometry in our flat Euclidean setting, which allows a simplified exposition. Secondly, we tweak the 'commutators with weight inside' in a subtle yet very powerful way to avoid higher order derivatives and archive stronger commutator lower bounds. Finally, the radiation condition estimates in \cite{ItoSkibsted20} (and elsewhere in the literature) heavily exploit the 'wiggle room' in the $O(r^{-1-2\eps})$ condition, that is, multiplicative factors of $r^{\eps'}$ for $\eps'<\eps$ can be handled without difficulty. We shall rediscover this wiggle room in \thref{cond2}. 

Before moving on we highlight some other instances of $L^1$-type decay conditions in the literature. Firstly, Agmon \cite{Agmon70}, extending the results of Kato \cite{Kato59} to the long-range case, established a version of Rellich's theorem under $L^1(dr)\cap o(r^{-1})$ decay. Agmon's result, considering $C^2$ solutions and requiring additional regularity on the potential $V$, is only slightly weaker than that of \thref{Rellich} below. Secondly, in \cite{Hormander2} on short-range scattering, Hörmander considers potentials $V = V^{sr}$ which map the unit open ball of
\[
\B^*_{H_0} = \{\psi \in H^2_{loc} \mid \norm{\psi}_{\B^*_{H_0}}<\infty\}, \quad \norm{\psi}_{\B^*_{H_0}} = \norm{\psi}_{\B^*} + \norm{p\psi}_{\B^*}+\norm{H_0\psi}_{\B^*}
\]
into a relatively compact subset of $\B$, where $\B$ is the Besov (or Agmon-Hörmander) space and $\B^*$ its dual. A similar condition is derived in \cite{Amrein_C0} using abstract Mourre machinery. As shown by \cite[Theorem 14.4.2]{Hormander2} this condition is satisfied if $V^{sr}$ is dominated by a radial \emph{decreasing} $L^1$ function.  The compactness of $V^{sr}:\B^*_{H_0}\to \B$ essentially reduces the spectral properties of $H_0 + V^{sr}$ to that of $H_0$ using perturbative arguments. In the setting of the present paper $W_0$ might not map $\B^*_{H_0}$ into $\B$, let alone compactly. Hörmander only considers $O(r^{-1-\eps})$ decay in the follow-up \cite{Hormander4} on long-range scattering. We finally mention the class of Enss potentials (cf.\ \cite{Enss}) as described in \cite[Section XI.17]{Reed3} which naturally incorporates a decreasing $L^1$ condition as well.

\subsection{Basic setting and notation}
We denote by $\mathcal{L}(X,Y)$ the space of bounded operators $X\to Y$ and write $\mathcal{L}(X)=\mathcal{L}(X,X)$.

Recall the Besov spaces (also known as Agmon-Hörmander spaces cf.\ \cite{AgmonHormander}) mentioned above: Set $F_k = 1_{\{2^{k-1}\leq r < 2^k\}}$ for $k\geq 1$. Then define
\begin{align*}
	\B &= \{\psi \in L^2_{loc} \mid \norm{\psi}_\B < \infty\}, \quad \norm{\psi}_\B = \sum_{k = 1}^\infty 2^{k/2} \norm{F_k \psi}\\
	\B^* &= \{\psi \in L^2_{loc} \mid \norm{\psi}_{\B^*}< \infty\}, \quad \norm{\psi}_{\B^*} = \sup_{k\geq 1} 2^{-k/2}\norm{F_k \psi}\\
	\B^*_0 &= \{\psi \in L^2_{loc}\mid \lim_{k\to \infty} 2^{-k/2}\norm{F_k \psi} = 0\}.
\end{align*}
We equip the spaces with their natural norms and identify the dual space of $\B$ with $\B^*$ through the $L^2$-pairing. It is well known that $\B$ and $\B^*$ are non-reflexive Banach spaces with $C^\infty_c$ dense in $\B$. The space $\B_0^*$ is exactly the closure of $C^\infty_c$ in $\B^*$, hence the notation. It is easily verified that $\B$ is separable. This in particular means that any norm bounded subset of $\B^*$ is metrizable and relatively compact in the weak-star topology, which is useful when proving the LAP. See \cite[Theorem 3.16]{Rudin_funct}.

Denote by $L^2_s(\R^d) = r^{-s}L^2(\R^d)$, $s\in \R$, the weighted space of all $\psi \in L^2_{loc}$ such that $r^s \psi \in L^2$ with the natural norm. The Besov spaces are interpolation spaces in the weighted $L^2$ scale with the following continuous inclusions:
\[
L^2_s \subseteq \B \subseteq L^2_{1/2} \subseteq L^2 \subseteq L^2_{-1/2} \subseteq \B^*_0 \subseteq \B^* \subseteq L^2_{-s}, \quad s>1/2.
\]
We denote by $H^2(\R^d) = W^{2,2}(\R^d)$ the standard order two $L^2$ Sobolev space. Other Sobolev spaces are denoted similarly. 

Like in the abstract Mourre theory, our commutator estimates are based on a well chosen 'conjugate operator'. We shall consider the symmetric radial differential operator
\begin{equation}\label{B,exp}
A = \Im(\omega\cdot\nabla) =  \tfrac{1}{2}(p^*\omega + \omega^*p) = \tfrac{1}{2}(p\cdot \omega + \omega\cdot p) = p\cdot \omega +\tfrac{i}{2}\Delta r = \omega\cdot p -\tfrac{i}{2} \Delta r
\end{equation}
with domain $D(A) = H^1$. Note that the vector field $\omega$ is smooth and bounded, so $A$ is relatively $H_0$-bounded, which is in contrast to the usual generator of radial dilations. The operator $A$ is also used in \cite{ItoSkibsted20} and \cite{ItoSkibsted21}, where a self-adjoint realization is also provided. Symmetry on $H^1$ suffices in our setting. Note the basic commutator identity
\[
i[A,f] = i(Af-fA) = \omega \cdot \nabla f, \quad f\in H^1_{loc}
\]
as quadratic forms on $C^\infty_c$. Using the operator $A$ we can decompose $H = H_0 + V$ into radial and spherical components in the following way. Consider the Laplace-Beltrami type operator associated to $A$:
\[
L = p^* M p \quad M = 1- \omega\omega^* = r \nabla^2 r,
\]
where $\nabla^2 r$ is the Hessian of $r$. We can then write
\begin{equation}\label{B^2}
	H = H_0 + V = L + A^2 + V + E_1,
\end{equation}
where
\begin{equation}\label{E_1}
	E_1 = \tfrac{1}{4} (\Delta r)^2 + \tfrac{1}{2}\omega\cdot \nabla \Delta r = \tfrac{1}{4}\big((d-1)(d-3)r^{-2} +(4d-10)r^{-4} + 7r^{-6}\big).
\end{equation}
Essentially the same decomposition is used in \cite{ItoSkibsted20}, although they absorb $E_1$ and $V$ into the 'effective potential' $q = E_1+V$, and they also divide out (for $r$ large) by $\abs{\omega}^2 = 1-r^{-2}$ in the matrix $\omega\omega^*$ in the definition of $L$. The explicit nature of $r$ in the present paper renders this unnecessary.

Finally introduce a function $\chi \in C^\infty(\R)$ such that
\[
\chi(t) = \begin{cases}
1,&  t\leq 1\\
0,&  t\geq 2
\end{cases}
, \quad \chi' \leq 0,
\]
and fix the smooth cut-offs $\chi_n = \chi_n(r) = \chi(r/2^n)$ and $\chi_{m,n} = \chi_n (1-\chi_m)$, $n,m\geq 0$. Here and throughout we use primes to denote derivatives of radial functions, i.e.\ if $f = f(r)\in C^1([1,\infty))$, for instance, then $\nabla f = \omega f'$.
\subsection{Results}
\subsubsection{Rellich type theorem}
We start with a uniqueness criterion to the homogeneous Helmholtz equation 
\begin{equation}\label{helm}
(H-\lambda)\phi = 0, \quad \lambda>0.
\end{equation}
If $\phi \in H^2_{loc}$ and $\phi$ satisfies \eqref{helm} in the distributional sense, we call $\phi$ a generalized eigenfunction with eigenvalue $\lambda$. If in addition $\phi \in L^2$, then $\phi$ is naturally a genuine eigenfunction. 
\begin{thm}\thlabel{Rellich}
Suppose \thref{cond1}. If $\phi \in H^2_{loc}\cap \B^*_0$ is a generalized eigenfunction with positive eigenvalue, then $\phi = 0$. In particular, $H$ has no positive point spectrum.
\end{thm}
Our proof is actually slightly more general. Only assuming $W_0 = o(r^{-1})$ in \thref{cond1} we are able to show that any generalized eigenfunction $\phi \in H^2_{loc}$ with positive eigenvalue $\lambda>0$ such that $\phi \exp(C_\lambda \smallint_1^r W_0 \, ds) \in \B^*_0$ must vanish identically. Here $C_\lambda$ is an explicitly computable constant only dependent on $\lambda$. See \thref{Rellich2} for an exact statement. In particular, our proof establishes the absence of positive point spectrum under classical $o(r^{-1})$ decay, which has been extensively studied in the literature. We mention \cite{Agmon70, Kato59, HerbstFroese, Simon69}. Comparatively our setting is very general. Taking $W_0 \in L^1(dr)$ obviously yields \thref{Rellich}. The sharp $\B^*_0$ condition will play a crucial role in the proofs of all theorems below. 

\subsubsection{LAP bounds}
For a set $I\subseteq (0,\infty)$, let
\[
I_\pm = \{\lambda \pm i \eps \mid \lambda \in I, \, 0<\eps<1\}.
\]
This should be read as two separate sets $I_+$ and $I_-$. The upper bound on $\eps$ is technically redundant in all theorems stated below. In the following, and elsewhere in the paper, we employ the notation $\sch{T}$ for the quadratic form associated to a symmetric operator $T$. 

\begin{thm}\thlabel{unif_bound}
Suppose \thref{cond1} and let $I\subseteq (0,\infty)$ be a compact set. There exist $C>0$ so that uniformly for all $\psi \in \B$ and $z\in I_\pm$
\[
\norm{R(z)\psi}^2_{\B^*} + \norm{AR(z)\psi}^2_{\B^*} + \sch{p^*\nabla^2 r p}_{R(z)\psi} \leq C \norm{\psi}^2_\B.
\]
\end{thm}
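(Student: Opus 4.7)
The plan is a Mourre-style commutator argument with an explicit family of ``shell multipliers'', combined with a Rellich-based contradiction to eliminate residual lower-order terms --- the flat-Euclidean incarnation of the Ito-Skibsted scheme. Set $u = u_z = R(z)\psi$ for $z = \lambda\pm i\eps\in I_\pm$, so $(H-z)u = \psi$; I treat $z = \lambda+i\eps$, the other sign being symmetric. The target $\B^*$-estimate is equivalent to uniform dyadic-shell bounds $\sup_n 2^{-n}\bigl(\norm{F_n u}^2 + \norm{F_n Au}^2\bigr) \lesssim \norm{\psi}_\B^2$ together with control on the spherical energy, so the construction should produce one shell estimate per $n$, with the constant independent of $n$.

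The main device is the multiplier $\Phi_n = \tfrac{1}{2}(\theta_n(r)A + A\theta_n(r))$, with $\theta_n\geq 0$ a smooth nondecreasing radial weight chosen so that $\theta_n'\gtrsim 2^{-2n}$ on $\{2^{n-1}\leq r\leq 2^n\}$ and $\theta_n = O(r^{-1})$ globally. The form identity
\[
\Im\langle\psi,\Phi_n u\rangle = \tfrac{1}{2}\langle u, i[H,\Phi_n]u\rangle - \eps\,\Re\langle u,\Phi_n u\rangle
\]
reduces the problem to a lower bound on $i[H,\Phi_n]$. Using the decomposition $H = L + A^2 + V + E_1$ of \eqref{B^2}, the positive contributions $i[A^2,\Phi_n] \geq A^*(2\theta_n')A$ and $i[L,\Phi_n]\geq c\,p^*(\theta_n M/r)p$ (modulo lower order) give direct control of $\norm{(\theta_n')^{1/2}Au}^2$ and of $\sch{p^*\nabla^2 r p}_u$ on the shell. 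The long-range contribution is $i[V^{lr},\Phi_n] = -\theta_n(\omega\cdot\nabla V^{lr})\geq -\theta_n W_0$ by \thref{cond1}(b), while the $V^{sr}$ and $E_1$ commutators are pointwise bounded by $C\theta_n'(W_0 + r^{-2})$. The $L^1$ and $o(r^{-1})$ hypotheses on $W_0$ let all these errors be absorbed into the main positive term plus a harmless low-order tail of the form $\norm{r^{-3/4}u}^2$. Combined with $|\Im\langle\psi,\Phi_n u\rangle|\leq C\norm{\psi}_\B(\norm{u}_{\B^*}+\norm{Au}_{\B^*})$ (Cauchy-Schwarz in the dual Besov pairing) and the favorable sign of the $\eps$-term (secured by $\theta_n\geq 0$ and the choice $z = \lambda + i\eps$), one obtains the shell estimate; taking the supremum over $n$ assembles the $\B^*$ bound modulo the tail.

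To remove the residual $\norm{r^{-3/4}u}^2$, argue by contradiction. Negating the theorem yields $(\psi_j, z_j)$ with $\norm{\psi_j}_\B \to 0$, LHS equal to $1$, and $z_j\to\lambda_*\in I$. Weak-$\ast$ compactness of the unit ball of $\B^*$ (from separability of $\B$, as noted after the Besov definitions) extracts $u_j \rightharpoonup^* u$ in $\B^*$ along a subsequence; the shell estimate propagates to the limit and forces $u\in\B^*_0$; the equation $(H-z_j)u_j = \psi_j$ passes distributionally to $(H-\lambda_*)u = 0$; and local compactness together with the continuous inclusion $\B^*_0\hookrightarrow L^2_{-s}$ for $s>1/2$ keeps $\norm{r^{-3/4}u_j}^2$ bounded below in the limit, so $u\neq 0$. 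But then \thref{Rellich} forces $\lambda_*\in\mathcal{E}_+$, contradicting $I\cap\mathcal{E}_+ = \emptyset$.

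The most delicate point is the treatment of $i[V^{lr},\Phi_n]$. In the classical $|\nabla V^{lr}| = O(r^{-1-\eps})$ regime, a power $r^{\eps'}$ of slack is available and the sign of $\omega\cdot\nabla V^{lr}$ is irrelevant. Here only the one-sided bound $\omega\cdot\nabla V^{lr}\leq W_0$ holds, and $W_0$ is merely $o(r^{-1})$ with no power-gain. This forces $\theta_n\geq 0$ (to convert the one-sided bound into an error of the correct sign) and forces the absorption to combine both $W_0 = o(r^{-1})$ (to make the pointwise multiplicative bound small on each shell) with $W_0\in L^1(dr)$ (to sum the shell errors). Matching the geometry of $\theta_n$ to both features of $W_0$ simultaneously is the ``subtle tweak'' highlighted in the introduction; the rest of the argument is essentially linear algebra of commutators.
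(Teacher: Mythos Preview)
Your overall plan---a shell-indexed positive-commutator bound followed by a Rellich-based contradiction to eliminate the residual lower-order tail---is precisely the paper's scheme, and the contradiction half is essentially correct. The commutator half, however, has two problems.

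The minor one: your multiplier is ill-specified. A nonnegative nondecreasing radial function that is $O(r^{-1})$ globally is identically zero. The paper instead takes $f_k(r)=1-(1+r/2^k)^{-1}$, bounded by $1$ (not $O(r^{-1})$) with $f_k'\sim 2^{-k}$ (not $2^{-2k}$) on the $k$th shell, and uses the ``weight-inside'' form $2\Im(A\tilde f(H-z))$ of \thref{key2} rather than the straight commutator $i[H,\Phi_n]$.

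The real gap is the absorption of the potential errors. Your claim that the $V^{sr}$ contribution is pointwise $\leq C\theta_n'(W_0+r^{-2})$ is false: $V^{sr}$ carries no derivative bound under \thref{cond1}, so the only option is Cauchy--Schwarz on $2\Im(A\theta_nV^{sr})$, which produces errors $\theta_n|V^{sr}|+A\theta_n|V^{sr}|A\leq \theta_nW_0+A\theta_nW_0A$---scaling with $\theta_n$, not $\theta_n'$. The one-sided $V^{lr}$ bound contributes the same $\theta_nW_0$. For any bounded nondecreasing weight the ratio $\theta_nW_0/\theta_n'$ is unbounded in $r$, so these terms are not dominated by the positive $\theta_n'$ and $A\theta_n'A$; and they cannot be pushed into a $\norm{r^{-3/4}u}^2$ tail either, since that would force $W_0\leq Cr^{-3/2}$, strictly stronger than \thref{cond1}. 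The paper's device is the exponential modification $\tilde f=f\exp\bigl(K\int_1^r W\,ds\bigr)$ of \eqref{tilde} and \thref{key1,key2}: differentiating injects an extra $KW\tilde f$ into $\tilde f'$ and an extra $A(KW\tilde f)A$ into $A\tilde f'A$, and for $K$ large enough (depending only on $\inf I$) these swallow the $\tilde fW_0$ and $A\tilde fW_0A$ errors, leaving only the genuinely harmless $Cr^{-2}f$ remainder that feeds the contradiction step. This modification is the mechanism that makes the minimal $o(r^{-1})\cap L^1$ hypothesis on $W_0$ workable, and it is absent from your sketch.
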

Since $A$ and $p^*\nabla^2 r p$ essentially act orthogonally, \thref{unif_bound} can be extended to also include $pR(z)\psi$. Indeed, simply note
\[
2^{-k}\norm{F_k pR(z)\psi}^2 \leq \sch{p^*F_k\nabla^2 rp}_{R(z)\psi} + 2^{-k}\norm{F_k \omega^*pR(z)\psi}^2
\]
and take $\sup$ over $k\in \N$. As such, \thref{unif_bound} in particular states that $R(z)$, $z\in I_\pm$, is uniformly bounded in $\mathcal{L}(\B,\B^*_{H_0})$, where $\B^*_{H_0}$ is the space mentioned in the introduction. 

\thref{unif_bound} is complemented by the following list of smoothness bounds. These are stated in terms of radial functions $W=W(r)$ which behave like $W_0$, that is
\begin{equation}\label{W_general}
W \geq 0, \quad W = o(r^{-1}), \quad W \in L^1([1,\infty),dr).
\end{equation}
Throughout $W^{1/2}$ will play the role usually taken by $r^{-1/2-\eps}$ in the literature.
\begin{prop}\thlabel{W_resolvent}
Suppose \thref{cond1}. Fix functions $W_1$, $W_2$ that satisfy \eqref{W_general} and consider a compact set $I\subseteq (0,\infty)$. The following are bounded uniformly for $z\in I_\pm$:
\begin{enumerate}
	\item The quadratic form $\psi \to \sch{p^*\nabla^2 r p}_{R(z)W_2^{1/2}\psi}$ on $L^2$,
	\item $W_1^{1/2}R(z)$ and $W_1^{1/2}pR(z)$ in $\mathcal{L}(\B,L^2)$,
	\item $R(z)W^{1/2}_1$ and $pR(z)W_1^{1/2}$ in $\mathcal{L}(L^2,\B^*)$,
	\item $W_1^{1/2}R(z)W_2^{1/2}$ and $W_1^{1/2}pR(z)W_2^{1/2}$ in $\mathcal{L}(L^2)$.
\end{enumerate}
\end{prop}
We shall use all of the the bounds from \thref{unif_bound,W_resolvent} in the proofs of \thref{unif_rad,LAP2} stated below. Our smoothness bounds should be compared to Yafaev's result \cite[Theorem 11.1.1]{Yafaev_analytic}. Standard applications of LAP bounds include absence of singular continuous spectrum and asymptotic completeness in the short-range case. See \cite[Chapter 4]{Yafaev_general}, for instance.

\subsubsection{Limiting absorption principle and radiation conditions}
Radiation condition bounds hinge on the observation that $(\partial_r \mp i\sqrt{z})R(z)\psi$ inherits in the limit $\pm\Im z\to 0^+$ 'excess' decay from $\psi$ and $W_0$. Here 'excess' should be understood as more than necessary decay in their respective admissible classes. For instance, consider the case $W_0 = O(r^{-1-2\eps})$ and set $h(r)^2 = r^{2\eps'}$ with $0<\eps'<\min\{1,\eps\}$ so that $h^2 W_0$ remains classically short range (the requirement $\eps<1$ is a technical artefact). The 'usual' radiation condition bound then asserts that if $\psi \in h^{-1}\B$, then $(\partial_r \mp i\sqrt{z})R(z)\psi$ is uniformly bounded in $h^{-1}\B^*$ for $z \in I_\pm$ for all compact $I\subseteq (0,\infty)$, i.e.\ that $(\partial_r \mp i\sqrt{z})R(z)\psi$ inherits the additional decay factor of $h = r^{\eps'}$ from $\psi$ and $W_0^{1/2}$ all the way to the 'cut' $\pm \Im z >0$. This is in contrast to $R(z)\psi$ where the space $\B^*$ is optimal even for $\psi \in C^\infty_c$ and $V = 0$. To establish radiation condition bounds in our minimal decay setting $W_0 \in L^1([1,\infty),dr)\cap o(r^{-1})$, in place of powers of $r$ we quantify the excess decay in terms of more general (possibly asymptotically smaller) functions $h = h(r) \in C^1$ that satisfy
\begin{equation}\label{h req}
h>0, \quad 0\leq h' \leq \beta_0 r^{-1} h, \quad h^2 W_0 = o(r^{-1}), \quad h^2 W_0 \in L^1([1,\infty),dr),
\end{equation}
where $\beta_0 <1$. It is elementary to verify that there always exists a function $h$ satisfying \eqref{h req} so that, in addition,  $h(r)\to \infty$ as $r\to \infty$ since there are some 'wiggle room' in the $L^1$ and $o(r^{-1})$ conditions. One could take $h = (\log r)^{\eps'}$ (for $r$ large) with $0<\eps'<\eps$ when $W_0 = O(r^{-1}(\log r)^{-1-2\eps})$, for instance. Note also that \eqref{h req} automatically requires $h\leq C r^{\beta_0} = o(r^{1})$ by Grönwall's inequality, or more precisely
\[
(s/t)^{\beta_0} \leq h(s)/h(t) \leq (t/s)^{\beta_0}, \quad 1\leq s\leq t.
\]
It is of separate interest in the development of stationary scattering theory to relax the requirement $\beta_0 <1$ in \eqref{h req} to archive so-called 'strong radiation condition bounds'. See \cite{HerbstSkibsted_91, ItoSkibsted_strong} for results in this direction.

Essentially following \cite{ItoSkibsted20}, in place of $\sqrt{z}$ we consider a phase $a = a_z$ constructed as follows: For each $\lambda \in (0,\infty)$, pick $r_\lambda \geq 1$ such that
\[
\lambda - V^{lr}-E_1 > \lambda/2  \text { when } r\geq \tfrac{1}{2} r_\lambda,
\]
which is possible since $V^{lr} + E_1 = o(1)$. Here $E_1$ is given by \eqref{E_1}. We may furthermore assume $\lambda \to r_\lambda$ is decreasing. Then define
\[
\eta_\lambda(r) = 1 - \chi(2r/r_\lambda)
\] 
which is supported in $r\geq \tfrac{1}{2}r_\lambda$ and constantly equal to one for $r\geq r_\lambda$. For $z = \lambda \pm i \eps$ with $\eps\geq 0$ and $\lambda>0$, set 
\begin{equation}\label{a_def}
a = a_z = \pm \eta_\lambda \sqrt{z-V^{lr}-E_1}.
\end{equation}
We choose the positive branch of the square root. The sign $\pm$ in front is solely to ensure $\Im(a)\geq 0$, while the cut-off $\eta$ allows us to take derivatives of $a$. Again, $a = a_{\lambda\pm i \eps}$ should be read as two separate identities (also when $\eps=0$). The decomposition \eqref{B^2} can now be refined to
\begin{equation}\label{A+a}
H-z = (A+a)(A-a) + L + V^{sr} - E_2
\end{equation}
with error
\begin{equation}\label{E_2}
E_2 = (1-\eta_\lambda^2)(z-V^{lr}-E_1) + i\omega\cdot \nabla a.
\end{equation}
The phase $a_z$ is chosen in particular to minimize the commutation errors arising from $E_2$. We can now state our radiation condition bounds:

\begin{thm}\thlabel{unif_rad}
Suppose \thref{cond2} and let $I\subseteq (0,\infty)$ be compact. Consider a function $h$ that satisfies \eqref{h req}. There exists $C>0$ so that for any $\psi \in h^{-1}\B$ and $z\in I_\pm$
\[
\norm{h(A-a)R(z)\psi}^2_{\B^*} + \sch{p^* h^2 \nabla^{2} r p}_{R(z)\psi} \leq C \norm{h \psi}_\B^2.
\]
\end{thm}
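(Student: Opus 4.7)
Let $u = R(z)\psi$ and $B = A-a$, so that $(H-z)u = \psi$ and $B^* = A - \bar{a}$, since $A$ is symmetric and $a$ depends only on $r$. Following the commutator scheme of \cite{ItoSkibsted20} that underlies the proofs of \thref{unif_bound,W_resolvent}, my plan is to construct, for each dyadic scale $n \geq 1$, a symmetric observable $\Phi_n$ whose expectation $\pair{u}{\Phi_n u}$ controls $2^{-n}\norm{F_n hBu}^2$ — so that $\sup_n$ yields $\norm{hBu}_{\B^*}^2$ — and whose commutator $i[H,\Phi_n]$ has a positive lower bound incorporating the spherical quantity $\sch{p^* \chi_n h^2 \nabla^2 r\, p}_u$. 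The natural candidate is schematically $\Phi_n = B^* \theta_n B$ with $\theta_n := 2^{-n}\chi_n h^2$, possibly modified by a skew-adjoint ``linear in $B$'' correction to be tuned in the closing step.

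The core computation is $i[H,\Phi_n]$ by Leibniz. The key algebraic input is the factorization
\[
H - z = L + B^* B + R,
\]
obtained from the decomposition \eqref{B^2} together with the approximate Ricatti relation $\omega \cdot \nabla a + a^2 = z - V^{lr}$, which is exact where $\eta_\lambda = 1$ (with the $\tfrac{1}{4}r^{-2}$-correction in $d=2$ tailored to cancel the negative piece of $E_1$). The remainder $R$ collects $V^{sr}$, $\omega\cdot\nabla V^{lr}$ (bounded by $W_0$ via \thref{cond2}), the higher-order pieces from \eqref{E_1}, and cut-off remainders supported on the bounded set $\{r \lesssim r_\lambda\}$. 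The dominant term in the commutator stems from the radial derivative $\theta_n'$ appearing in $B^* i[H, \theta_n] B = -B^*(p^* \omega\, \theta_n' + \theta_n'\, \omega^* p) B$; writing $\theta_n' = 2^{-n}(\chi_n' h^2 + 2\chi_n h h')$ and invoking the critical bound $2 h h' \leq 2\beta_0 r^{-1} h^2$ with $\beta_0 < 1$ from \eqref{h req} leaves a strictly positive residual coefficient $(1-\beta_0)$ on $r^{-1}\theta_n$, while the $\chi_n'$-term produces the scale-$2^n$ boundary piece that isolates $2^{-n}\norm{F_n hBu}^2$. Combined with the factorization of $H-z$, one derives a lower bound
\[
\pair{u}{i[H,\Phi_n] u} \geq c\bigl(2^{-n}\norm{F_n hBu}^2 + \sch{p^* \chi_n h^2 \nabla^2 r\, p}_u\bigr) - C\,(\text{errors}),
\]
with errors of the form $\norm{W_0^{1/2} h\,\cdot}_{L^2}$ applied to $u$, $Bu$, and $\omega^* p u$ — precisely the quantities controlled by \thref{W_resolvent} applied with the admissible weight $W := h^2 W_0$ (see \eqref{h req}).

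The matching upper bound comes from the standard identity $\pair{u}{i[H,\Phi_n] u} = 2 \Im(z)\pair{u}{\Phi_n u} - 2\Im\pair{\psi}{\Phi_n u}$. The $\Im(z)$-term is bounded by $C\norm{hBu}_{\B^*}^2$ uniformly for $z \in I_\pm$ and can be absorbed into the main lower-bound term once the leading commutator coefficient has been tuned large enough; Cauchy--Schwarz on the second yields $\abs{\pair{\psi}{\Phi_n u}} \leq C \norm{h\psi}_\B \cdot \norm{hBu}_{\B^*}$, provided the skew-adjoint correction in $\Phi_n$ has been set up so that $\Phi_n \psi$ reduces, after integration by parts, to factors of $h\psi$ only (rather than $B\psi$). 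Solving the resulting quadratic inequality in $\norm{hBu}_{\B^*}$ and taking $\sup_n$ closes the estimate. The principal obstacle is precisely this coordinated design of $\Phi_n$: the strict inequality $\beta_0 < 1$ in \eqref{h req} is what creates the positive leading commutator coefficient, and is exactly the \emph{wiggle room} mentioned in the introduction; the linear-in-$B$ correction is what keeps the right-hand side expressed in terms of $\norm{h\psi}_\B$ rather than a derivative norm of $\psi$; and verifying that every stray error reduces to a smoothness bound of \thref{W_resolvent} with $W = h^2 W_0$ is the bulk of the technical work.
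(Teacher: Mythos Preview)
Your overall strategy---a dyadic weight times $h^2$, feeding the $W_0$-type errors to \thref{W_resolvent} with the admissible choice $W=h^2W_0$, and exploiting the strict inequality $\beta_0<1$ for the positive margin---is exactly the paper's. But the central device does not close as written. With the \emph{quadratic} observable $\Phi_n=B^*\theta_n B$ the identity you invoke gives $\pair{\psi}{\Phi_n u}=\pair{B\psi}{\theta_n Bu}$, which contains a derivative of $\psi$; for $\psi\in h^{-1}\B$ there is no control on $B\psi$, and integrating by parts only trades this for $B^*Bu$, which after using the equation still produces $\pair{\psi}{\theta_n Lu}$ and hence $p\psi$. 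No skew-adjoint ``linear in $B$'' correction removes this: any such term again pairs $B\psi$ against something. Separately, the $2\Im(z)\pair{u}{\Phi_n u}$ term is not absorbable by ``tuning the leading coefficient'': there is no free large parameter, and the commutator gain scales with $\theta_n'$ while this term scales with $\theta_n$. Finally, $\theta_n=2^{-n}\chi_n h^2$ has $\chi_n'\le 0$, so the boundary contribution enters $(A-a)^*\theta_n'(A-a)$ with the wrong sign.

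The paper avoids all three issues by working not with $i[H,\Phi_n]$ for a symmetric $\Phi_n$ but with the \emph{linear-in-$B$} form $2\Im\big((A-a)^*f(H-z)\big)$ directly (\thref{rad_bound}). Evaluated on $u=R(z)\psi$ this is simply $2\Im\pair{f(A-a)u}{\psi}$: no $B\psi$ appears and no separate $\Im(z)$ term arises, because $(H-z)u=\psi$ is substituted before any commutation (the $\Im z$-dependence enters only through $\Im a\ge 0$, which contributes with the favourable sign inside the factorization \eqref{not 2}--\eqref{d2}). The lower bound of \thref{rad_bound} requires only $f'\le\beta r^{-1}f$ with $\beta<2$; the paper takes the \emph{increasing} weight $f=h^2\Theta_k^{\beta_1}$ with $\Theta_k=1-(1+r/2^k)^{-1}$ and $\beta_1>0$ small enough that $2\beta_0+\beta_1<2$. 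The spherical positivity $(2-\beta)\,p^*f\nabla^2 r\,p$ then comes from $2\Im((A-a)^*fL)$ via \thref{DL}, not from $B^*i[H,\theta_n]B$ as you suggest. Restricting to $F_k$, taking $\sup_k$, and closing with Young's inequality finishes the proof.
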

We complement \thref{unif_rad} with the following radiation condition 'smoothness bounds':
\begin{prop}\thlabel{rad_smooth}
Suppose \thref{cond2}. Consider functions $W_1$, $W_2$ that satisfy \eqref{W_general} and a function $h$ that satisfies \eqref{h req}. Assume moreover that $h^2W_2$ also satisfies \eqref{W_general}. Then, for any compact set $I\subseteq (0,\infty)$,
\begin{enumerate}
	\item $W_1^{1/2}h (A-a) R(z) W_2^{1/2}$ is uniformly bounded for $z\in I_\pm$ in $\mathcal{L}(L^2)$,
	\item $h(A-a)R(z)W_2^{1/2}$ is uniformly bounded for $z\in I_\pm$ in $\mathcal{L}(L^2,\B^*)$.
\end{enumerate}
\end{prop}
As an application we archive the full limiting absorption principle. We remind the reader that the function '$h$' below can be chosen so that $h(r)\to \infty$ as $r\to \infty$.
\begin{cor}\thlabel{LAP2}
	Suppose \thref{cond2}. Let $I\subseteq (0,\infty)$ be a compact set, and fix functions $W$ and $h$ such that $h$ satisfies \eqref{h req} and both $W$ and $h^2W$ satisfy \eqref{W_general}. Then there exists $C>0$ such that
	\begin{equation}\label{unif cont}
	\norm{W^{1/2}\big(R(z)-R(z')\big)W^{1/2}}_{\mathcal{L}(L^2)} \leq  \frac{C}{h(\abs{z-z'}^{-1})}
	\end{equation}
	for all $z, z' \in I_\pm$ with $\abs{z-z'}\leq 1$.
	
	We conclude the LAP: Equip $\B^*$ with the weak-star topology. For each fixed $\psi \in \B$ and $\lambda\in (0,\infty)$, the boundary values
	\[
	R(\lambda\pm i0)\psi = \wslim_{\eps\to0^+} R(\lambda\pm i \eps)\psi \in \B^*
	\]
	exist, and the corresponding extensions
	\[
	\{\lambda\pm i \eps \mid \lambda>0, \, \eps \geq 0\} \ni z \to R(z)\psi \in \B^*
	\]
	are locally uniformly continuous.
\end{cor}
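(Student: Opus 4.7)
The bulk of Corollary \thref{LAP2} is the operator-norm continuity estimate \eqref{unif cont}; granted this, the LAP portion is a soft-analysis consequence of \thref{unif_bound}. I would first prove the continuity estimate and then deduce the LAP.

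For the continuity estimate, start from the resolvent identity
\[
W^{1/2}\bigl(R(z) - R(z')\bigr)W^{1/2} = (z-z')\, W^{1/2} R(z) R(z') W^{1/2},
\]
and set $R_* = \abs{z-z'}^{-1}$. The goal is a bound of the form $C/h(R_*)$ in $\mathcal{L}(L^2)$. I would insert a smooth cut-off $\chi(r/R_*)$ between the two resolvents, splitting the right-hand side into an outer piece ($r \geq R_*$) and an inner piece ($r \leq R_*$). On the outer piece, monotonicity $h(r) \geq h(R_*)$ lets me replace the identity there by $h/h(R_*)$, producing a bilinear form in the new weight $h^2 W$; since $h^2 W$ still satisfies \eqref{W_general}, \thref{W_resolvent}(d) applied to this weight yields the required factor $h(R_*)^{-1}$. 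On the inner piece the naive $L^2$-operator norm of $R(z) R(z')$ diverges as $\Im z, \Im z' \to 0$, and here \thref{rad_smooth} is essential: bound $h(A - a_z) R(z) W^{1/2}$ and $h(A - a_{z'}) R(z') W^{1/2}$ individually in $\mathcal{L}(L^2)$, and exploit that $z \mapsto a_z$ is smooth on each branch $I_\pm$, giving $\abs{a_z - a_{z'}} = O(\abs{z-z'})$ uniformly on $\{\eta_\lambda = 1\}$. The rewrite $A = (A - a_z) + a_z = (A - a_{z'}) + a_{z'}$ then couples the two resolvents through an $\abs{z-z'}$-small term which, combined with the polynomial growth $h(r) \leq h(R_*)(r/R_*)^{\beta_0}$ implied by \eqref{h req} and the prefactor $\abs{z-z'} = R_*^{-1}$, closes the estimate with the required $h(R_*)^{-1}$ decay. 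The cross-terms generated by $[\chi(r/R_*), H]$ have $O(R_*^{-1})$ coefficients supported on the annulus $r \sim R_*$ and are controlled by \thref{unif_bound}. The main obstacle, I expect, is the delicate bookkeeping of weights and operator ordering in the inner region, where one must simultaneously accommodate the radiation condition, the weight $h$, and the factorization of $H-z$ near the phase $a_z$.

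Given \eqref{unif cont} and the uniform bound $\norm{R(z)\psi}_{\B^*} \leq C\norm{\psi}_{\B}$ of \thref{unif_bound}, the LAP follows by a standard density argument. For $\psi = W^{1/2}\eta$ with $\eta \in L^2$, the net $W^{1/2}R(z)\psi = W^{1/2}R(z)W^{1/2}\eta$ is Cauchy in $L^2$ as $z \to \lambda \pm i0$ by \eqref{unif cont}, so pairing against $W^{1/2}\phi$ for $\phi \in L^2$ identifies the weak-star limit of $R(z)\psi$ in $\B^*$ uniquely. Replacing $W$ if necessary by a strictly positive minorant still satisfying \eqref{W_general}, one obtains $W^{1/2} C^\infty_c$ dense in $\B$, and an $\eps/3$ argument using \thref{unif_bound} to control tails extends weak-star convergence and local uniform continuity of $z \mapsto R(z)\psi \in \B^*$ to all $\psi \in \B$.
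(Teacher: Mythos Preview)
Your high-level plan (cut at scale $R_*=\abs{z-z'}^{-1}$, use \thref{W_resolvent} on the far region and the radiation condition on the near region) is exactly the right idea, and it is also the paper's. But the concrete decomposition you propose does not close.

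The gap is in the outer piece. After the resolvent identity you have $(z-z')\,W^{1/2}R(z)(1-\chi)R(z')W^{1/2}$, with $(1-\chi)$ sitting \emph{between} the two resolvents, hence separated from both copies of $W^{1/2}$. The substitution you describe, ``replace the identity there by $h/h(R_*)$ and apply \thref{W_resolvent}(d) with weight $h^2W$'', requires $(1-\chi)$ to touch a factor $W^{1/2}$ so that $W^{1/2}(1-\chi)=h^{-1}(1-\chi)\,(h^2W)^{1/2}$; as written it does not, and there is no uniform $L^2$ bound on $W^{1/2}R(z)(1-\chi)^{1/2}$ or on $hR(z')W^{1/2}$ to fall back on. The paper avoids this by localising \emph{before} the resolvent identity: it writes
\[
R(z)-R(z')=\big(\chi_nR(z)\chi_n-\chi_nR(z')\chi_n\big)+\big(R(z)-\chi_nR(z)\chi_n\big)-\big(R(z')-\chi_nR(z')\chi_n\big),
\]
so the outer terms carry $(1-\chi_n)$ \emph{adjacent} to $W^{1/2}$ and are immediately $O(\norm{h^{-1}(1-\chi_n)}_\infty)=O(1/h(2^n))$ by \thref{W_resolvent}(d). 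Only then is the inner difference expanded via $\chi_nR(z)\chi_n-\chi_nR(z')\chi_n=\chi_nR(z)\big(\chi_{n+1}(H-z')-(H-z)\chi_{n+1}\big)R(z')\chi_n$, producing a $(z-z')$ term and a $[H_0,\chi_{n+1}]$ term.

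Two further points on the inner/commutator part. First, \thref{rad_smooth} gives $h(A-a_z)R(z)W^{1/2}\in\mathcal{L}(L^2,\B^*)$, not $\mathcal{L}(L^2)$; the passage from $\B^*$ to $\B$ in the middle is what costs the factor $\norm{h^{-1}\chi_{n+1}}_{\mathcal{L}(\B^*,\B)}\le C\,2^n/h(2^n)$, and this is where the growth bound on $h$ actually enters. Second, the commutator $[H_0,\chi_{n+1}]=-2i\Re(\chi_{n+1}'A)$ must itself be decomposed through $A=(A-a_{z'})+a_{z'}=(A-a_{\bar z})^*-a_z$ and handled with \thref{rad_smooth}; using only \thref{unif_bound} here gives $O(1)$, not $O(1/h(R_*))$, because the $O(R_*^{-1})$ size of $\chi_{n+1}'$ is exactly cancelled by the $\B^*\to\B$ norm of the annulus.

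Your LAP argument is fine and matches the paper's: given \eqref{unif cont}, one manufactures for each $\psi\in\B$ a weight $W_\psi$ with $W_\psi^{-1/2}\psi\in L^2$ (the paper takes $W_\psi=\sum_k 2^{-k/2}\max\{\norm{F_k\psi},2^{-k}\}F_k$), applies \eqref{unif cont} with a suitable $h\to\infty$, and uses the uniform $\B\to\B^*$ bound of \thref{unif_bound} plus weak-star metrizability on bounded sets to pass to the boundary.
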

We highlight the explicit nature of the bound \eqref{unif cont}. In the classical case $W_0 = O(r^{-1-2\eps})$, taking $W = r^{-2s}$ with $s>1/2$ and $h = r^{\eps'}$ with $0< \eps' <\min\{s-1/2, \eps, 1\}$ shows that the resolvent $z \to R(z)\in \mathcal{L}(L^2_s, L^2_{-s})$ is Locally Hölder continuous with parameter $\eps'$. This is in agreement with \cite[Corollary 1.11]{ItoSkibsted20}. Similarly, if $W_0 = O(r^{-1}(\log r)^{-1-2\eps})$ we obtain the new bound
\[
\norm{R(z)-R(z')}_{\mathcal{L}(L^2_s,L^2_{-s})} \leq C \Big( \log\big(\abs{z-z'}^{-1}\big)\Big)^{-\eps'}
\]
for any $s>1/2$, $0<\eps'<\eps$, and $\abs{z-z'}$ sufficiently small.

By uniform boundedness it is naturally also possible to take the weak-star limit $\eps\to 0^+$ for $pR(\lambda\pm i \eps)\psi$ and $H_0R(\lambda\pm i \eps)\psi$ in $\B^*$. In particular, the limiting resolvents $R(\lambda\pm i0)$ map into $H^2_{loc}$, or more precisely the space $\B^*_{H_0}$ considered in the introduction (and in \cite{Hormander2}). Weak-star continuity therefore allows us to extend the conclusions of \thref{unif_bound,W_resolvent,unif_rad,rad_smooth} to include the case $z= \lambda\pm i0$ as well. Making use of this we can establish the following Sommerfeld uniqueness result which generalizes \thref{Rellich} to the inhomogeneous case.

\begin{cor}\thlabel{Sommerfeld2}
Suppose \thref{cond2} and fix some $\lambda\in (0,\infty)$. The limiting resolvents $R(\lambda\pm i 0)\in \mathcal{L}(\B,\B^*)$ are uniquely characterized by the following: For any function $h$ that satisfies \eqref{h req} and any $\psi \in h^{-1}\B$,
\begin{enumerate}
	\item $R(\lambda \pm i 0)\psi \in H^2_{loc}\cap h\B^*$ and $(H-\lambda)R(\lambda\pm i0)\psi = \psi$ in the distributional sense,
	\item $(A-a_{\lambda\pm i 0})R(\lambda\pm i 0)\psi \in h^{-1}\B^*_0$.
\end{enumerate}
\end{cor}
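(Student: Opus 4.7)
The argument splits into existence (that $R(\lambda\pm i 0)\psi$ satisfies (a) and (b)) and uniqueness (that any $\phi$ satisfying (a) and (b) agrees with $R(\lambda\pm i 0)\psi$).

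For existence most of the content is already at hand. By the discussion following \thref{LAP2}, I may pass to the weak-star limit in \thref{unif_bound} and in the distributional identity $(H-z)R(z)\psi=\psi$, which gives $R(\lambda\pm i0)\psi\in H^2_{loc}\cap\B^*\subseteq H^2_{loc}\cap h\B^*$ and $(H-\lambda)R(\lambda\pm i 0)\psi=\psi$. Similarly \thref{unif_rad} extends to $z=\lambda\pm i0$, yielding $h(A-a)R(\lambda\pm i 0)\psi\in\B^*$. To upgrade this to $\B^*_0$ I would use a density argument: approximate $\psi$ by $\psi_n\in C^\infty_c$ in $h^{-1}\B$ and, for each $n$, choose a slightly enlarged radial weight $\tilde h\geq h$ satisfying \eqref{h req} with $\tilde h/h\to\infty$; such $\tilde h$ exist up to logarithmic improvements, using the wiggle room baked into \eqref{h req}. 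Since $\psi_n$ is compactly supported, $\tilde h\psi_n\in\B$ trivially, so applying \thref{unif_rad} with $\tilde h$ gives $\tilde h(A-a)R(\lambda\pm i 0)\psi_n\in\B^*$. Writing $h(A-a)R(\lambda\pm i 0)\psi_n=(h/\tilde h)\cdot\tilde h(A-a)R(\lambda\pm i 0)\psi_n$ and using $h/\tilde h\to 0$ places the left-hand side in $\B^*_0$. Sending $n\to\infty$ and invoking the closedness of $\B^*_0$ in $\B^*$ together with the continuity of $h(A-a)R(\lambda\pm i 0)\colon h^{-1}\B\to\B^*$ gives (b) in general.

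For uniqueness, given $\phi$ satisfying (a) and (b), set $\tilde\phi=\phi-R(\lambda\pm i 0)\psi$. Then $\tilde\phi\in H^2_{loc}\cap h\B^*$ solves $(H-\lambda)\tilde\phi=0$ and satisfies $(A-a)\tilde\phi\in h^{-1}\B^*_0$. In view of \thref{Rellich} and $\lambda\notin\mathcal{E}_+$, it suffices to show $\tilde\phi\in\B^*_0$. I would run a Sommerfeld-type multiplier argument: starting from the trivial identity $\Im\pair{(H-\lambda)\tilde\phi}{M_n\tilde\phi}=0$ for a radial multiplier $M_n$ built from the cut-off $\chi_n$ and possibly from $h$, expand via the decomposition \eqref{B^2} so that the leading commutator isolates a term controlled by $(A-a)\tilde\phi$ through the radiation condition, plus a residue whose sign is fixed by $\Im a\geq 0$ modulo the Riccati defect $\abs{\omega\cdot\nabla a+a^2-(\lambda-V^{lr})}\leq CW_0$ (which is where \thref{cond2} enters). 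Passing $n\to\infty$, the radiation condition forces the error term to vanish and the sign pins $\tilde\phi$ into $\B^*_0$.

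The principal obstacle is this last Sommerfeld step. Without a source driving the estimate, the only levers are the sign of $\Im a$ and the $h^{-1}\B^*_0$-decay of $(A-a)\tilde\phi$, and the multiplier $M_n$ has to be chosen so that the gain from $\Im a$ dominates both the Riccati error and the spherical-derivative contribution from $L$, all while absorbing the (possibly unbounded) weight $h$ attached to $\tilde\phi$. Technically the scheme should largely parallel the commutator computations underlying \thref{unif_rad}, now applied to the homogeneous equation with $(A-a)\tilde\phi$ playing the role previously played by the source $\psi$.
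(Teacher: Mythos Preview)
Your overall architecture is right: prove existence by passing the bounds to the boundary and exploiting the wiggle room in \eqref{h req}, and prove uniqueness by showing the difference $\tilde\phi$ lies in $\B^*_0$ and invoking \thref{Rellich}. Two points deserve comment.

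For existence, your density detour works, but the paper does it in one stroke. Given $\psi\in h^{-1}\B$, one can directly find a radial $h_0$ with $h_0(r)\to\infty$, $hh_0$ still satisfying \eqref{h req}, and $hh_0\psi\in\B$ (this is precisely the wiggle room you invoke for $C^\infty_c$ functions, but it applies to any fixed $\psi\in h^{-1}\B$). Then \thref{unif_rad} with the weight $hh_0$ gives $hh_0(A-a)R(\lambda\pm i0)\psi\in\B^*$, hence $h(A-a)R(\lambda\pm i0)\psi\in h_0^{-1}\B^*\subseteq\B^*_0$. No approximation or closure argument is needed.

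For uniqueness your mechanism is off. At the real boundary value $z=\lambda$ one has $\Im a=0$, so no positivity can come from $\Im a\geq 0$; likewise there is no Riccati defect to manage and \thref{cond2} does not enter this step directly. The paper's argument is a one-line identity with $M_n=\chi_n$: since $\Im(\chi_n(H-\lambda))=\Re(A\chi_n')$, one rewrites
\[
0=\sch{\Im(\chi_n(H-\lambda))}_{\tilde\phi}=\sch{\Re((A-a)\chi_n')}_{\tilde\phi}+\sch{\chi_n'\Re(a)}_{\tilde\phi}.
\]
The sign is driven by $\Re(a)=\pm\eta_\lambda\sqrt{\lambda-V^{lr}}$ together with $\chi_n'\leq 0$, which makes $\mp\chi_n'\Re(a)\geq 0$; the remaining term is bounded by $\norm{h^{-1}\tilde\phi}_{\B^*}\norm{\chi_n' h(A-a)\tilde\phi}_\B\to 0$ using $h(A-a)\tilde\phi\in\B^*_0$ and the annular support of $\chi_n'$. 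This immediately gives $2^{-n}\norm{F_{n+1}\tilde\phi}^2\to 0$, i.e.\ $\tilde\phi\in\B^*_0$. No expansion via \eqref{B^2}, no $L$-term, and no commutator machinery from \thref{rad_bound} is needed.
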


\section{Rellich's theorem}
In this section we first quickly describe the scheme of 'commutators with weight inside' developed by Ito and Skibsted in \cite{ItoSkibsted20}. We then prove \thref{Rellich} in a manner similar to that of \cite[Theorem 1.4]{ItoSkibsted20}.

\subsection{Commutators with weight inside}
Following the method developed in \cite{ItoSkibsted20} we shall consider 'commutators' of the form
\begin{equation}\label{weight_inside}
\sch{2\Im(AfH)}_\psi = \sch{i(HfA-AfH)}_\psi = i\pair{fA\psi}{H\psi} - i \pair{H\psi}{fA\psi},
\end{equation}
for $\psi \in C^\infty_c$, where the choice of weight $f$ depends on the concrete implementation. In the proofs of \thref{Rellich,unif_bound,W_resolvent} we use a subtle modification of the corresponding weights in \cite{ItoSkibsted20}. Namely, if a weight $f = f(r)$ is used in \cite{ItoSkibsted20}, we consider instead (essentially) $\tilde{f} = f \exp(K\smallint_1^r W_0 \, ds)$ where $K$ is a large constant. The additional term is used to control the commutation errors arising from the potential $V$ in \eqref{weight_inside}. Compared to \cite{ItoSkibsted20}, this is the key technical difference (apart from geometry) that facilitates the relaxation from $O(r^{-1-\eps})$ to $L^1\cap o(r^{-1})$. Here it should be noted that $W_0 = O(r^{-1-\eps})$ errors are easily manageable on the Besov scale, but this is not the case for $W_0 \in L^1(dr)\cap o(r^{-1})$ (\thref{W_resolvent} actually shows that these errors are manageable but this is non-trivial). The additional term, however, severely restricts the regularity of $\tilde{f}$. Given $f = f(r) \in C^\infty$, it is clear that $\tilde{f}$ is locally Lipschitz, or equivalently $\tilde{f}\in W^{1,\infty}_{loc}$, and the chain rule applies:
$\nabla\tilde{f} = \omega\cdot\big(f' \exp(K\smallint_1^r W_0 \, ds) + KW_0\tilde{f}\big)$
weakly and strongly almost everywhere. The low regularity of $\tilde{f}$ necessitates some care when expanding the commutator \eqref{weight_inside}. We finally mention that the 'commutators' employed in \cite{Yafaev_analytic,Vakulenko} by Yafaev and Vakulenko also use conjugate operators which include a term like $\exp(K\smallint_1^r W_0\, ds)$ to control errors arising from $V$. Their methods are fundamentally different, however.

We expand the commutator \eqref{weight_inside} according to the decompositions from \eqref{B^2} and \eqref{A+a}. To this end, the following explicit computation is obviously useful. All computations below require little more that integration by parts and as such are completely routine.

\begin{lemma}\thlabel{DL}
Let $f = f(r)$ be locally Lipschitz as a function of $r$ on $[1,\infty)$. Then, as forms on $C^\infty_c$,
\begin{align*}
	2\Im(AfL) &= - \Im(\Delta r f' r^{-2} \omega^* p) - \tfrac{1}{2}\operatorname{div}(f r^{-2}(\Delta r)' \omega)\\
	&\quad + p^* (2f r^{-1}-\abs{\omega}^2f') M p + 2p^*\omega r^{-2} f' \omega^*p.
\end{align*}
\end{lemma}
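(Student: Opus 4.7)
My plan is to compute $2\Im(AfL) = \tfrac{1}{i}(AfL - LfA)$ directly as a sesquilinear form on $C^\infty_c$ via commutator identities, using the algebraic split
\[
LfA - AfL = [L,f]\,A + f\,[L,A] - [A,f]\,L,
\]
and handling each commutator in turn.

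For $[A,f]$ I use the identity $i[A,f] = \omega\cdot\nabla f$ from the preamble. For radial locally Lipschitz $f$ this equals $f'|\omega|^2 = (1-r^{-2})f'$ pointwise a.e., so $[A,f]$ acts as multiplication by a bounded function. The contribution is of the form "$-f'L$" up to a $(1-r^{-2})$ correction, which will be reabsorbed later.

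For $[L,f]$ I exploit that $M = 1-\omega\omega^*$ and $f$ are both multiplication operators, hence commute. A direct IBP computation then yields
\[
[L,f] = -\operatorname{div}(M\nabla f) - 2i(M\nabla f)\cdot p.
\]
Substituting $M\nabla f = r^{-2}f'\omega$ (using $M\omega = r^{-2}\omega$) and converting $\omega\cdot p = A + \tfrac{i}{2}\Delta r$ produces a combination of $A$ and a radial divergence; the $\Delta r$ terms from the two pieces cancel, leaving a compact expression of the shape "$-2if'r^{-2}A - (\text{radial divergence})$".

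For $[L,A]$ I use the decomposition $L = H_0 - A^2 - E_1$ from \eqref{B^2}. Since $[A^2,A] = 0$, this reduces to $[L,A] = [H_0,A] - [E_1,A]$, and splitting $A = \omega\cdot p - \tfrac{i}{2}\Delta r$ further reduces to $[H_0,\omega\cdot p]$ and $[H_0,\Delta r]$. The former rests on $\partial_j\omega_k = M_{jk}/r$ and the key identity $\sum_j\partial_jM_{jk} = -(\Delta r + r^{-3})\omega_k$, which controls how $\sum_{jk}M_{jk}p_jp_k$ differs from $L = \sum_{jk}p_jM_{jk}p_k$ by a first-order term. After the anticipated cancellations among the $(d-1)r^{-2}$ and $r^{-4}$ coefficients, $[L,A]$ collapses to a clean form.

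To assemble: I substitute the three commutator formulas into the split, then replace the $A^2$ appearing in the $i[L,f]A$ piece using $A^2 = p^*\omega\omega^*p - E_1$ (which follows from $H_0 = p^*p = L + A^2 + E_1$ combined with $L = p^*Mp$). To match the RHS I move scalar coefficients across $p^*$ using $g p^* = p^* g + ig'\omega^*$. The second-order pieces, after using $1 - M = \omega\omega^*$, combine into $p^*(2fr^{-1}-f')Mp + p^*r^{-2}f'(1+\omega\omega^*)p$. The remaining first-order $\omega^*p$ contributions together with the residual radial multipliers consolidate into the shorthand $-\Im(\Delta r\,f'r^{-2}\omega^*p) - \tfrac{1}{2}\operatorname{div}(fr^{-2}(\Delta r)'\omega)$ — indeed, for real radial $g$ one has $2\Im(g\,\omega^*p) = g\Delta r + g'$, which turns a particular linear combination of $\omega^*p$ and zeroth-order terms into a pure multiplication operator.

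The main obstacle is pure bookkeeping: many terms, many factors of $(1-r^{-2})$ coming from $\omega\cdot\nabla f = |\omega|^2 f'$ for $r = \langle x\rangle$, and repeated commutator juggling to move scalar coefficients inside and outside $p$ and $p^*$. Recognizing that particular first-order plus zeroth-order combinations collapse into the $\Im(\ldots\omega^*p)$ and $\operatorname{div}$ shorthands is the cleanest step to verify. The reduced regularity of $f$ is not a genuine obstacle: all manipulations take place as forms on $C^\infty_c$, where IBP is unambiguous and the a.e.\ chain rule recalled just before the lemma justifies every derivative that appears.
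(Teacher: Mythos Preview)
Your approach is viable but takes a genuinely different route from the paper's. The paper never uses the three-commutator split $LfA - AfL = [L,f]A + f[L,A] - [A,f]L$; instead it works directly from $L = p^*Mp$, writing
\[
2\Im(AfL) = 2\Im(Ap^*fMp) + 2\Re(Af'\omega^*Mp)
\]
(the second term from $[p^*,f] = -if'\omega^*$), then splits $A = p^*\omega + \tfrac{i}{2}\Delta r$ and computes $2\Im(p^*\omega p^*fMp)$ by a bare index calculation. This produces all the $p^*(\cdot)p$ terms already in symmetric form, without ever computing $[L,A]$ or invoking the global decomposition $H_0 = L + A^2 + E_1$. The trade-off: your route is more modular --- three self-contained commutators, then assembly --- but it forces you to compute $[L,A]$ (equivalently $[H_0,A]$) as an intermediate second-order object and then re-symmetrize $f[L,A]$ back into $p^*(\cdot)p$ form, which is precisely the bookkeeping you flag as the main obstacle. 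The paper's route keeps $f$ sandwiched inside $p^*fMp$ from the start, so no such reassembly is needed and the detour through $E_1$ is avoided entirely; on the other hand, your decomposition makes the structure of the computation more transparent and each commutator is reusable on its own.
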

\begin{proof}
	Note that $M\omega = r^{-2}\omega$. This will be used repeatedly. First rewrite
	\begin{align*}
		2\Im(AfL) &= 2\Im(Ap^*fMp) - 2 \Im(A[p^*,f]Mp)\\
		& = 2\Im(Ap^*fMp) + 2 \Re(Af' \omega^*Mp).
	\end{align*}
	We compute each term separably. For the second:
	\begin{align*}
		2\Re(Af' \omega^* M p) &= 2\Re((p^*\omega + \tfrac{i}{2}\Delta r)f' r^{-2}\omega^*p)\\
		&= 2p^*\omega r^{-2} f' \omega^*p - \Im(\Delta r f' r^{-2} \omega^* p).
	\end{align*}
	For the first term, rewrite $2\Im(Ap^*fMp) = 2\Im(p^*\omega p^* fMp) + \Re(\Delta r p^* Mf p)$ and then compute
	\begin{align*}
		\Re(\Delta r p^* Mf p) &= p^* f\Delta r M p - \Im((\Delta r)' f \omega^* Mp)\\
		&= p^* f \Delta r M p - \tfrac{1}{2}\operatorname{div}(r^{-2}(\Delta r)' f \omega),
	\end{align*}
	and 
	\begin{align*}
		2\Im(p^*\omega p^* fMp) &= i\sum_{i,k,j} \big(p_ifM_{ik}p_k\omega_j p_j - p_j \omega_j p_k M_{ik}fp_i)\\
		& = i \sum_{i,j,k} \big(p_i f M_{ik}\omega_j p_k p_j - p_j p_k\omega_j M_{ik}f p_i\big)\\
		&\quad + \sum_{i,k,j}\big(p_i f M_{ik}(\nabla^2 r)_{kj}p_j + p_j (\nabla^2 r)_{jk}M_{ki} f p_i\big)\\
		&=i \sum_{i,j,k} \big(p_j p_i f M_{ik}\omega_j p_k - p_j p_k M_{ik}\omega_jf p_i\big)\\
		&\quad - i\sum_{i,j,k} p_i [p_j,fM_{ik}\omega_j]p_k\\
		&\quad + \sum_{i,j} p_i f (M\nabla^2 r)_{ij}p_j + p_j (\nabla^2 r M)_{ji}f p_i.
	\end{align*}
	Exchanging the roles of $i$ and $k$ in the second term of the first sum we see that the entire sum vanishes since $M$ is symmetric. The third sum is equal to $2p^* f M\nabla^2 rp$ since $M$ and $\nabla^2 r$ commutes. Finally, we can explicitly compute the second sum as
	\begin{align*}
		- i\sum_{i,j,k} p_i [p_j,fM_{ik}\omega_j]p_k = - p^* f\Delta r Mp - p^* M f' \abs{\omega}^2 p + 2p^*\omega r^{-3}f\omega^*p. 
	\end{align*}
	Collecting similar terms we conclude
	\begin{align*}
		2\Im(AfL) &= 2p^*\omega r^{-2} f' \omega^* p - \Im(\Delta r f' r^{-2} \omega^* p) - \tfrac{1}{2}\operatorname{div}(f r^{-2}(\Delta r)' \omega)\\
		&\quad - p^* M f' \abs{\omega}^2p + 2p^*\omega r^{-3} f \omega^*p+ 2p^* f M\nabla^2 rp\\
		&= - \Im(\Delta r f' r^{-2} \omega^* p) - \tfrac{1}{2} \operatorname{div}(f r^{-2}(\Delta r)' \omega)\\
		&\quad + p^* (2f r^{-1}-\abs{\omega}^2f') M p + 2 p^* r^{-2} f' \omega\omega^* p. 
	\end{align*}
	The lemma follows. 
\end{proof}
\subsection{Proof of Rellich's theorem}
Our proof of \thref{Rellich} follow the arguments from \cite{ItoSkibsted20} closely, which in turn shares resemblance with \cite{HerbstFroese}, for instance. The argument is classical in spirit. We first show that any (generalized) eigenfunction with positive eigenvalue has super-exponential decay. We then show that any such eigenfunction must be compactly supported and hence vanish identically by the unique continuation property of $H$. Both arguments hinge on very careful commutator estimates of the form \eqref{weight_inside}. As mentioned in the introduction, we are able to prove a generalization of \thref{Rellich}.

\begin{condition}\thlabel{cond3}
	There is a radial function $W = W(r)\geq 0$ and a splitting $V = V^{sr} + V^{lr}$ into real bounded functions such that:
	\begin{enumerate}
		\item $W = o(r^{-1})$ as $\abs{x} \to \infty$.
		\item $V^{lr}$ has first order distributional derivatives in $L^1_{loc}$, and 
		\[
		V^{lr} = o(1) = o(r^0) \text{ as } \abs{x}\to \infty, \quad \omega \cdot \nabla V^{lr} \leq W.
		\]
		\item $V^{sr}$ satisfies the short-range condition
		\[
		\abs{V^{sr}}\leq W.
		\]
	\end{enumerate}
\end{condition}
This is just \thref{cond1} without the $L^1$ assumption. We then show:
\begin{thm}\thlabel{Rellich2}
Suppose \thref{cond3}. If $\phi\in H^2_{loc}$ is a generalized eigenfunction with positive eigenvalue $\lambda>0$ such that 
\[
\phi \exp\big((\tfrac{4}{\lambda}+1)\smallint_1^r W\, ds\big) \in \B^*_0,
\]
then $\phi = 0$.
\end{thm}
We impose \thref{cond3} throughout the remainder of this section. We have in no way tried to optimize the exponent $\tfrac{4}{\lambda}+1$, although it is technically possible with some extra care. Note that the exponential term is $O(r^\eps)$ for any $\eps>0$ since $W = o(r^{-1})$, so \thref{Rellich2} in particular gives the absence of $L^2_{-1/2+\eps}$ eigenfunctions, which seems optimal without the $L^1$ requirement.

The proof of \thref{Rellich} depends on lower bounds of commutators of the form \eqref{weight_inside}. We consider the weights
\[
f = f_{\alpha,\beta,R,m,n}(r) = \chi_{m,n}\exp\Big(2\alpha r + 2\beta R\big(1-(1+r/R)^{-1}\big)\Big)
\]
where $\alpha \geq 0$, $\beta\in [0,1]$, $n>m\geq 0$, and $R\geq 1$. The same weights are used in \cite{ItoSkibsted20} (with $\delta = 1$). Write $\Theta = 2\alpha r + 2\beta R\big(1-(1+r/R)^{-1}\big)$ and introduce $\theta_0 = 1 + r/R$.  Note that $\Theta = 2r\big(\alpha + \beta \theta_0^{-1}\big)$ and 
\[
\Theta' = 2\alpha + 2\beta \theta_0^{-2}, \quad \Theta^{(n)} = 2\beta  (-1)^{n+1}n! \,r^{1-n} \theta_0^{-2}(1-\theta_0^{-1})^{n-1}, \quad n\geq 2.
\]
As mentioned above we are mainly interested in the modification
\[
\tilde{f} = f g, \quad g = g_K(r) = \exp\big(K\smallint_1^r W\, ds\big), \quad K\geq 1.
\]
\begin{lemma}\thlabel{Rellich_lemma}
	Fix $\lambda, \alpha_0 >0$.
	\begin{enumerate}
		\item There exists $\beta, C, c >0$ and $n_0 \geq 0$ so that for any $\alpha \in[0,\alpha_0]$, $n>m\geq n_0$, and $R\geq 2^{n_0}$, taking $K = \tfrac{4}{\lambda}(\alpha + 2) +2$ in the definition of $g$, as forms on $H^2(\R^d)$,
		\begin{align*}
			2\Im(A\tilde{f}(H-\lambda)) &\geq cr^{-1}\theta_0^{-1}\tilde{f} - C(\chi_{n-1,n+1}^2 + 				\chi_{m-1,m+1}^2)r^{-1}ge^\Theta\\
			&\quad - \Re(\gamma(H-\lambda)),
		\end{align*}	
		where $\gamma$ is a function, $\abs{\gamma} \leq C g e^\Theta$ and $\supp \gamma \subseteq \supp \tilde{f}$.
		
		\item Taking $K = \tfrac{4}{\lambda} +2$ and $\beta=0$ in the definition of $\tilde{f}$, we can choose $C,c >0$ and $n_0\geq 0$ so that uniformly for $\alpha \geq \alpha_0$ and $n>m\geq n_0$, as forms on $H^2(\R^d)$,
		\begin{align*}
			2\Im(A\tilde{f}(H-\lambda)) &\geq c\alpha^2r^{-1}\tilde{f} - C\alpha^2(\chi_{n-1,n+1}^2 + \chi_{m-1,m+1}^2)r^{-1}ge^{2\alpha r}\\
			&\quad - \Re(\gamma(H-\lambda)),
		\end{align*}
		where $\gamma$ is a function, $\abs{\gamma} \leq \alpha C g e^\Theta$ and $\supp \gamma \subseteq \supp \tilde{f}$.
	\end{enumerate}
\end{lemma}
\begin{proof}
It suffices by density to establish the lower bounds in a) and b) as forms on $C^\infty_c$. Our first goal is to derive an explicit and useful lower bound on $2\Im(A\tilde{f}(H-\lambda))$ uniform in the parameters $R\geq 1$, $\beta \in [0,1]$, $\alpha\geq 0$, $n>m\geq 0$, and $1\leq K\leq K_0$, where $K_0 = \tfrac{4}{\lambda}(\alpha_0 +2)+2$ only depends on $\lambda$ and $\alpha_0$. All constants $c,C>0$ below are uniform in these parameters. We shall collect errors in the term
\[
Q = (1+\alpha^2)\big(r^{-2}\tilde{f} + \abs{\chi_{m,n}'}ge^\Theta + \abs{\chi_{m,n}''}ge^\Theta\big) + p^*\big(r^{-2}\tilde{f} + \abs{\chi_{m,n}'}ge^\Theta\big)p.
\]
It is important that no factor of $\alpha$ appears in the second term.

Using \thref{DL}, the decomposition \eqref{B^2} along with the Cauchy-Schwarz and Young inequalities, first bound
\begin{equation}\label{R1}
\begin{aligned}
	2\Im(A\tilde{f}(H-\lambda)) &= \lambda \abs{\omega}^2 \tilde{f}' + A\abs{\omega}^2 \tilde{f}' A -\abs{\omega}^2(\tilde{f}E_1)'\\
	&\quad +2\Im(A\tilde{f}V) +2\Im(A\tilde{f}L)\\
	&\geq \Theta'\tilde{f}\abs{\omega}^2(\lambda-V^{lr}) + A\abs{\omega}^2 \Theta' \tilde{f}A\\
	&\quad + (K-1)AW\tilde{f}A + p^*(2r^{-1}\tilde{f}-\abs{\omega}^2\tilde{f}')Mp\\
	&\quad  + 2p^*\omega \Theta'\tilde{f}r^{-2}\omega^*p + \tilde{f}W\big(K(\lambda-V^{lr})-2\big) - CQ.
\end{aligned}
\end{equation}
We shall spend considerable effort to combine the second, fourth, and fifth terms above. Initially, a direct computation using \eqref{B,exp} shows 
\begin{equation}\label{R2}
\begin{aligned}
	\MoveEqLeft A\abs{\omega}^2 \Theta' \tilde{f}A + p^*(2r^{-1}\tilde{f}-\abs{\omega}^2\tilde{f}')Mp + 2p^*\omega \Theta'\tilde{f}(1-\abs{\omega}^2)\omega^*p\\
	&\geq p^*\omega \abs{\omega}^2\Theta' \tilde{f}\omega^*p - \tfrac{1}{2}\Delta r \Theta'^2 \tilde{f}+ p^*(2r^{-1}\tilde{f}-\abs{\omega}^2\tilde{f}')Mp\\
	&\quad + 2p^*\omega \Theta'\tilde{f}(1-\abs{\omega}^2)\omega^*p - CQ\\
	&\geq 2p^*\omega \Theta' \tilde{f}\omega^*p + p^*\tilde{f}(r^{-1} - KW)Mp + p^* \tilde{f}r^{-1}Mp\\
	&\quad - p^* \abs{\omega}^2 \Theta' \tilde{f}p - \tfrac{1}{2}\Delta r \Theta'^2 \tilde{f} - CQ\\
	&\geq  p^*\omega \tilde{f}(2\Theta'- r^{-1}\theta_0^{-1})\omega^*p + p^*\tilde{f}(r^{-1} - KW)Mp\\
	&\quad -p^*\tilde{f}(\abs{\omega}^2\Theta' -r^{-1}\theta_0^{-1})p- \tfrac{1}{2}\Delta r \Theta'^2 \tilde{f} - CQ.
\end{aligned}
\end{equation}
Note we introduced $p^* \tilde{f}r^{-1}Mp = p^* \tilde{f}r^{-1}(1-\theta_0^{-1})M p + p^* \tilde{f}\theta_0^{-1}r^{-1}Mp \geq p^* \tilde{f}\theta_0^{-1}r^{-1}Mp$. We now bound the first and third terms of \eqref{R2} on the right. For the first term, a very tedious computation shows
\begin{align*}
	p^*\omega \tilde{f}(2\Theta'- r^{-1}\theta_0^{-1})\omega^*p &=(A+\tfrac{i}{2}\abs{\omega}^2\Theta') \tilde{f} (2 \Theta'- r^{-1}\theta_0^{-1}) (A- \tfrac{i}{2}\abs{\omega}^2\Theta')\\
	&\quad -\tfrac{1}{4}\tilde{f} (2 \Theta'- r^{-1}\theta_0^{-1})(\abs{\omega}^2\Theta' + \Delta r)^2\\
	&\quad + \tfrac{1}{2}\operatorname{div}\big(\tilde{f} (2 \Theta'- r^{-1}\theta_0^{-1})(\abs{\omega}^2\Theta' + \Delta r)\omega\big)\\
	&\geq (A+\tfrac{i}{2}\abs{\omega}^2\Theta') \tilde{f} (2\Theta'- r^{-1}\theta_0^{-1}) (A- \tfrac{i}{2}\abs{\omega}^2\Theta')\\
	&\quad + \tfrac{1}{2}\tilde{f}\abs{\omega}^4 \Theta'^3 - \tfrac{1}{4}\tilde{f}\Theta'^2 r^{-1}\theta_0^{-1} + \tilde{f}\Delta r\Theta'^2 + 2\tilde{f}\Theta'' \Theta'\\
	&\quad + KW \tilde{f}\Theta'^2 - CQ.
\end{align*}
Now for the third term in \eqref{R2}, using the general identity
\begin{equation}\label{pgp}
p^*gp = \Re(gH_0) + \Im((\nabla g)\cdot p), \quad g\in H^1_{loc}
\end{equation}
as forms on $C^\infty_c$, we see that
\begin{align*}
	-p^*\tilde{f}(\abs{\omega}^2\Theta' -r^{-1}\theta_0^{-1})p &= -\Re(\tilde{f}(\abs{\omega}^2\Theta' -r^{-1}\theta_0^{-1})(H-\lambda))\\
	&\quad - (\lambda - V)\tilde{f}(\abs{\omega}^2\Theta' -r^{-1}\theta_0^{-1})\\
	&\quad - \Im(\big(\tilde{f}(\abs{\omega}^2\Theta' -r^{-1}\theta_0^{-1})\big)'\omega^* p)\\
	&\geq -\Re(\tilde{f}(\abs{\omega}^2\Theta' -r^{-1}\theta_0^{-1})(H-\lambda))\\
	&\quad - (\lambda - V)\tilde{f}(\abs{\omega}^2\Theta' -r^{-1}\theta_0^{-1})-\tfrac{1}{2}\tilde{f}\Delta r \Theta'^2\\
	&\quad -K\Im(W\tilde{f}\abs{\omega}^2 \Theta' \omega^*p) - \tfrac{1}{2}KW\tilde{f}\Theta'^2\\
	&\quad - \tfrac{3}{2}\tilde{f}\Theta' \Theta'' - \tfrac{1}{2}\tilde{f}\Theta'^3\abs{\omega}^4 + \tfrac{1}{2}\tilde{f}\Theta'^2 r^{-1}\theta_0^{-1}.
\end{align*}
Plugging everything into \eqref{R2} we conclude
\begin{equation}\label{R3}
\begin{aligned}
	\MoveEqLeft A\abs{\omega}^2 \Theta' \tilde{f}A + p^*(2r^{-1}\tilde{f}-\abs{\omega}^2\tilde{f}')Mp + 2p^*\omega \Theta'\tilde{f}(1-\abs{\omega}^2)\omega^*p\\
	&\geq (A+\tfrac{i}{2}\abs{\omega}^2\Theta') \tilde{f} (2 \Theta'- r^{-1}\theta_0^{-1}) (A- \tfrac{i}{2}\abs{\omega}^2\Theta')\\
	&\quad -\Re(\tilde{f}(\abs{\omega}^2\Theta' -r^{-1}\theta_0^{-1})(H-\lambda)) + \tfrac{1}{2}KW\tilde{f}\Theta'^2\\
	&\quad - (\lambda - V)\tilde{f}(\abs{\omega}^2\Theta' -r^{-1}\theta_0^{-1}) + p^*\tilde{f}(r^{-1} - KW)Mp\\
	&\quad  -K\Im(W\tilde{f}\abs{\omega}^2 \Theta' \omega^*p)+\tfrac{1}{4}\tilde{f}\Theta'^2 r^{-1}\theta_0^{-1} + \tfrac{1}{2}\tilde{f}\Theta'\Theta''\\
	&\quad  - CQ.
\end{aligned}
\end{equation}
Luckily all bad terms cancel. Before returning to \eqref{R1} we make the simple bounds
\[
-K\Im(W\tilde{f}\abs{\omega}^2\Theta' \omega^*p) \geq -\tfrac{K}{2} \tilde{f}W \Theta'^2 - \tfrac{K}{2}A\tilde{f}WA,
\]
and using \eqref{pgp} along with the Cauchy-Schwarz and Young inequalities
\begin{align*}
	Q &\leq C(1+\alpha^2)(r^{-2}\tilde{f}+(\chi_{n-1,n+1}^2 + \chi_{m-1,m+1}^2)r^{-1}ge^\Theta)\\
	&\quad + 2\Re((\tilde{f}r^{-2}+\abs{\chi_{m,n}'}ge^\Theta)(H-\lambda)).
\end{align*}
Here we bounded $\abs{\chi_{m,n}'} \leq C r^{-1} (\chi_{n-1,n+1}^2 + \chi_{m-1,m+1}^2)$. Plugging \eqref{R3} into \eqref{R1} and using the two bounds above we see
\begin{equation}\label{R4}
\begin{aligned}
	2\Im(A\tilde{f}(H-\lambda)) &\geq \tilde{f} r^{-1}\theta_0^{-1}(\lambda-V^{lr}) + \tilde{f}W\big(K(\lambda-V^{lr})-2-\Theta'\big)\\
	&\quad + (A+\tfrac{i}{2}\abs{\omega}^2\Theta') \tilde{f} (2 \Theta'- r^{-1}\theta_0^{-1}) (A- \tfrac{i}{2}\abs{\omega}^2\Theta')\\
	&\quad +\tfrac{1}{4}\tilde{f}\Theta'^2 r^{-1}\theta_0^{-1} + \tfrac{1}{2}\tilde{f}\Theta'\Theta'' + p^*\tilde{f}(r^{-1} - KW)Mp\\
	&\quad + (\tfrac{K}{2}-1)AW\tilde{f}A - \Re(\gamma(H-\lambda))\\
	&\quad - C(1+\alpha^2)(r^{-2}\tilde{f}+(\chi_{n-1,n+1}^2 + \chi_{m-1,m+1}^2)r^{-1}ge^\Theta),
\end{aligned}
\end{equation}
where $\gamma = 2C(\tilde{f}r^{-2}+\abs{\chi_{m,n}'}ge^\Theta) +\tilde{f}(\abs{\omega}^2\Theta' -r^{-1}\theta_0^{-1})$ with an unimportant computable constant $C>0$. This finishes the first (and main) part of proof.

Now we fix parameters according to a) and b) of the lemma. First for a), take $K = \tfrac{1}{\lambda}(4\alpha + 8)+2$ in the definition of $\tilde{f}$. Then choose $n_0$ sufficiently large, only dependent on $\alpha_0$ and $\lambda$ so that $\lambda- V^{lr}\geq \tfrac{\lambda}{2}$ in $\supp \tilde{f}$ when $n>m\geq n_0$. Taking $n_0$ larger sill we may also assume $r^{-1}- KW \geq 0$ in $\supp\tilde{f}$ since $W=o(r^{-1})$. Then
\begin{align*}
\tilde{f}W\big(K(\lambda-V^{lr})-2-\Theta'\big)+ (\tfrac{K}{2}-1)AW\tilde{f}A + p^*\tilde{f}(r^{-1}-KW)Mp \geq 0
\end{align*}
uniformly for $\beta\in[0,1]$, $0\leq \alpha\leq \alpha_0$, $n>m\geq n_0$, and $R\geq 2^{n_0}$. Taking $n_0$ larger still and $\beta>0$ small we also archive
\[
\tilde{f} r^{-1}\theta_0^{-1}(\lambda-V^{lr})+\tfrac{1}{4}\tilde{f}\Theta'^2 r^{-1}\theta_0^{-1} + \tfrac{1}{2}\tilde{f}\Theta'\Theta'' - C(1+\alpha^2)r^{-2}\tilde{f} \geq c \tilde{f} r^{-1} \theta_0^{-1}
\]
uniformly in $0\leq \alpha\leq \alpha_0$, $n>m\geq n_0$ and $R\geq 2^{n_0}$. Finally $2\Theta' - r^{-1}\theta_0^{-1}\geq 0$ for $n_0$ large, and then 
\[
 (A+\tfrac{i}{2}\abs{\omega}^2\Theta') \tilde{f} (2 \Theta'- r^{-1}\theta_0^{-1}) (A- \tfrac{i}{2}\abs{\omega}^2\Theta')\geq 0.
\]
The conclusion of a) follows after implementing the bounds in \eqref{R4}.

Now for b), take $\beta = 0$ and $K = \tfrac{4}{\lambda}+2$ in the definition of $\tilde{f}$. Then take the limit $R\to \infty$ in \eqref{R4}, or equivalently $\theta_0^{-1} \to 1$. Choosing $n_0$ sufficiently large like in a) we can bound
\begin{align*}
\MoveEqLeft \tilde{f} r^{-1}(\lambda-V^{lr}) + \tilde{f}W\big(K(\lambda-V^{lr})-2\big) +  (\tfrac{K}{2}-1)AW\tilde{f}A\\
&\quad +(A+\tfrac{i}{2}\abs{\omega}^2\Theta') \tilde{f} (2 \Theta'- r^{-1}) (A- \tfrac{i}{2}\abs{\omega}^2\Theta')+  p^*\tilde{f}(r^{-1} - KW)Mp\\
&\geq 0
\end{align*}
uniformly for $n>m\geq n_0$ and $\alpha\geq \alpha_0>0$. It remains to bound
\[
\alpha^2 \tilde{f}r^{-1} - 2\alpha\tilde{f}W-C(1+\alpha^2)\tilde{f}r^{-2} \geq c \alpha^2 \tilde{f}r^{-1}
\]
by taking $n_0$ larger sill to reach the conclusion in b), which finishes the proof.
\end{proof}

\begin{proof}[Proof of \thref{Rellich2}]
Take $\phi$ and $\lambda$ as stated in the theorem and assume that
\[
\phi \exp\big((\tfrac{4}{\lambda}+1)\smallint_1^r W\, ds\big) \in \B^*_0.
\]
Our first goal is to establish super-exponential decay of $\phi$ in the sense that $e^{\alpha r} \phi \in L^2$ for any $\alpha\geq 0$. As such, introduce 
\[
\alpha_0 = \sup\{\alpha\geq 0 \mid \phi\, e^{\alpha r} \exp\Big(\big(\tfrac{2}{\lambda}(\alpha + 2) + 1\big)\smallint_1^r W \, ds\Big) \in \B_0^*\}.
\]
The set we take $\sup$ over is non-empty by assumption. Assume for a contradiction that $\alpha_0<\infty$. Take $\beta$ and $n_0$ in agreement with assertion a) of \thref{Rellich_lemma}. If $\alpha_0 = 0$ choose $\alpha = 0$. Otherwise take $0\leq \alpha <\alpha_0$ so that $\alpha + \beta > \alpha_0$. This also fixes $K = \tfrac{4}{\lambda}(\alpha+2)+2$. Now consider $\tilde{f}$ from \thref{Rellich_lemma} with these parameters fixed and $n>m\geq n_0$ and $R\geq 2^{n_0}$ arbitrary. Applying the bound a) to the state $\chi_{m-2,n+2} \phi \in H^2(\R^d)$ gives
\[
\norm{r^{-1/2}\theta_0^{-1/2} \tilde{f}^{1/2} \phi}^2 \leq C_m \norm{\chi_{m-1,m+1}\phi}^2 + C_R 2^{-n} \norm{\chi_{n-1,n+1} e^{\alpha r} g^{1/2} \phi}^2,
\]
uniformly for $n>m \geq n_0$ and $R\geq 2^{n_0}$, where $C_R$ and $C_m$ only depends on $R$ and $m$, respectively. We used that $r \theta_0^{-1} \leq R$ and that $\chi_{m-2,n+2}\phi$ vanishes on the 'commutators' involving $H-\lambda$ since $\phi$ is a generalized eigenfunction and $\supp \chi_{m-2,n+2}' \cap \supp \tilde{f} = \emptyset$. Note that $e^{\alpha r}g^{1/2} \phi \in \B^*_0$ by assumption, so we can take the limit $n \to \infty$ to see
\[
\norm{r^{-1/2}g^{1/2} \theta_0^{-1/2}(1-\chi_m)^{1/2} e^{\Theta/2} \phi}^2 \leq C_m \norm{\chi_{m-1,m+1}\phi}^2
\]
uniformly for $R\geq 2^{n_0}$ and $m\geq n_0$. We can then take the limit $R \to \infty$ by monotone convergence
\[
\norm{r^{-1/2}g^{1/2}(1-\chi_m)^{1/2} e^{(\alpha + \beta)r}\phi}^2 \leq C_m \norm{\chi_{m-1,m+1}\phi}^2.
\]
We conclude
\begin{align*}
\abs{\phi} \, e^{\kappa r} \exp\big((\tfrac{2}{\lambda}(\kappa + 2)+1)\smallint_1^r W\, ds\big) \leq C r^{-1/2} g^{1/2}e^{(\alpha+\beta)r} \abs{\phi} \in L^2 \subseteq \B^*_0,
\end{align*}
for any $\alpha_0 < \kappa < \alpha + \beta$, since $W = O(r^{-1})$, which reaches the desired contradiction.

We now prove $\phi$ is compactly supported. To this end we use assertion b) of \thref{Rellich_lemma}, so take $\alpha_0 =1$, say, $\beta= 0$, $K$, and $n_0$ accordingly. Consider $\tilde{f}$ with these parameters fixed and $n>m\geq n_0$ arbitrary. Applying the bound in b) to the state $\chi_{m-2,n+2}\phi$ yields, as before, 
\[
\norm{r^{-1/2} \tilde{f}^{1/2}\phi}^2 \leq C \norm{\chi_{m-1,m+1}e^{\alpha r} g^{1/2} \phi}^2 + C\, 2^{-n} \norm{\chi_{n-1,n+1}e^{\alpha r} g^{1/2}\phi}^2.
\]
Using the established result that $g^{1/2} e^{\alpha r} \phi \in \B^*_0$ for any $\alpha \geq 1$, we can take the limit $n\to \infty$ to see
\[
\norm{r^{-1/2}g^{1/2}(1-\chi_m)^{1/2}e^{\alpha(r-2^{m+2})} \phi}^2 \leq C \norm{\chi_{m-1,m+1}g^{1/2}\phi}^2
\]
uniformly for $\alpha \geq 1$ and $m\geq n_0$. Note that the right hand side does not depend on $\alpha$, so if $(1-\chi_{m+2})\phi \neq 0$, then restricting the left hand side to $\supp (1-\chi_{m+2})$ and taking the limit $\alpha \to \infty$ yields a contradiction. We therefore conclude that $\phi$ is compactly supported. The unique continuation property of Schrödinger operators (cf.\ \cite[Theorem XIII.57]{Reed4}) finally shows that $\phi$ must vanish identically, which gives the claim.
\end{proof}

\section{LAP bounds}
In this section we prove \thref{unif_bound,W_resolvent}. Both proofs rely on a basic commutator estimate of the form \eqref{weight_inside} with suitable weights. We impose \thref{cond1} throughout the entire section. 

We shall consider weights $f = f(r) \in C^1([1,\infty))$ satisfying
\begin{equation}\label{LAPB_f}
	0\leq f \leq \norm{f}_\infty<\infty, \quad 0\leq f' \leq \beta r^{-1} f, \quad  0\leq \beta <2,
\end{equation}
and a fixed function $W = W(r)$ that satisfies \eqref{W_general} with $W\geq W_0$. The requirement that $\beta<2$ is inessential although it simplifies the estimate below somewhat. The fact that $f$ is bounded is key, however. With this is place, we consider the modification
\begin{equation*}
\tilde{f} = \tilde{f}_{K,W}(r) = f(r) \exp\Big(K \int_1^r W(s)\, ds\Big), \quad K\geq 1.
\end{equation*}
Note that the $L^1$ requirement on $W$ makes $\tilde{f}$ bounded.

\begin{lemma}\thlabel{LAPB_lemma}
For any compact set $I\subseteq (0,\infty)$, $0\leq \beta<2$, and any function $W$ that satisfies \eqref{W_general} with $W\geq W_0$, we can choose a parameter $K\geq 1$ and constants $C,c>0$ such that the following holds: Let $f$ be a bounded function that satisfies \eqref{LAPB_f}. For all $z\in I_\pm$ there is a complex function $\gamma = \gamma_z$ such that
\begin{equation}\label{LAPB_Im}
	\begin{aligned}
		2\Im(A\tilde{f}(H-z)) &\geq cf' + cfW + cAf'A + cAfWA + cp^*f\nabla^2 rp\\
		&\quad  - C r^{-3}f - \Re(\gamma(H-z))
	\end{aligned}
\end{equation}
as forms on $H^2(\R^d)$. We have the bound $\abs{\gamma}\leq C \norm{f}_\infty$.
\end{lemma}
\begin{proof}
The argument is significantly simper than that of \thref{Rellich_lemma}. It suffices to establish the bound \eqref{LAPB_Im} as forms on $C^\infty_c$ by density. Consider first $\tilde{f}$ with arbitrary $K\geq 1$ to be fixed later. Our choice of $K$ will only depend on $\inf I$. All constants $C,c, C_1, \dots>0$ below are allowed only to depend on $I$, $\beta$, and $W$. 
Introduce the error
	\[
	Q = f r^{-3} + p^* f r^{-3} p.
	\]
Write $z = \lambda \pm i \eps$ for $\lambda \in I$ and $0<\eps<1$. First expand 
\begin{align*}
2\Im(A\tilde{f}(H-z)) = 2\Im(A\tilde{f}(H-\lambda)) \mp 2\eps \Re(A\tilde{f}).
\end{align*}
Using the identity $\eps = \pm \Re(i(H-z))$, the fact that $A:H^1\to L^2$ is bounded, and that $V$ is bounded we see
\begin{align*}
\mp 2 \eps \Re(A\tilde{f}) &\geq -\eps \tilde{f} - \eps A\tilde{f}A \geq -C_1 \eps \norm{f}_\infty \big(1+ A^2\big) \geq - C_2 \eps \norm{f}_\infty(1+H_0)\\
&\geq -C_3 \eps \norm{f}_\infty - \Re(C_2\eps\norm{f}_\infty(H-z))\\
&= -\Re\big(\norm{f}_\infty(C_2\eps\pm i C_3)(H-z)\big).
\end{align*}
Hence it remains to bound $2\Im(A\tilde{f}(H-\lambda))$ in the required form. By the decomposition \eqref{B^2}, \thref{DL}, and the Cauchy-Schwarz and Young inequalities it should be clear that
\begin{align*}
2\Im(A\tilde{f}(H-\lambda)) &\geq f' (\lambda - V^{lr}) + \tilde{f} W\big( K(\lambda-V^{lr}) - 2\big) + Af'A\\
&\quad + (K-1)A\tilde{f}WA + p^* (\tilde{f}r^{-1} - \tilde{f}')Mp - CQ.
\end{align*}
We used that $\tilde{f}' = O(r^{-1})$ to obtain $O(r^{-3})$ errors. Choosing $K$ sufficiently large only dependent on $\inf I$ we can bound
\[
f' (\lambda - V^{lr}) + \tilde{f} W\big( K(\lambda-V^{lr}) - 2\big)+ (K-1)A\tilde{f}WA \geq c f' + c fW + cAfWA - CQ
\]
since $\tfrac{1}{2}\inf I > V^{lr}$ outside a compact set. Similarly
\[
p^* (\tilde{f}r^{-1} - \tilde{f}')Mp \geq p^*( (2-\beta)\tilde{f}r^{-1} - KW\tilde{f})Mp \geq c p^*f r^{-1}Mp - CQ
\]
since $W = o(r^{-1})$. A final application of the Cauchy-Schwarz and Young inequalities shows
\[
Q \leq 2 \Re(fr^{-3}(H-z)) + C f r^{-3}.
\]
Combining the simple bounds above clearly yields the desired conclusion.
\end{proof}
The appearance of $\norm{f}_\infty$ in the bound of $\gamma$ is necessary, and it explains why the resolvent $R(z)\psi$ does not inherit good decay from $\psi$ in the limit $\Im z \to 0$. This should be compared to \thref{rad_bound} below.

We now prove \thref{unif_bound}. The argument is almost identical to that of \cite[Theorem 1.7]{ItoSkibsted20}. We include it for the readers convenience and to keep the present paper self-contained. We shall also reuse a few arguments later.

\begin{proof}[Proof of \thref{unif_bound}]
	Fix a compact set $I\subseteq (0,\infty)$. Consider the family of functions 
	\[
	f = f_k(r) = 1 - \big(1+r/2^k\big)^{-1}, \quad k\geq 0,
	\]
	which clearly satisfy \eqref{LAPB_f} with $\beta = 1$ uniformly for all $k$. Note also that $\norm{f_k}_{\infty} \leq 1$. We can thus implement \thref{LAPB_lemma} with $W = W_0$:
	\begin{equation}\label{B1}
	\begin{aligned}
	\MoveEqLeft \norm{f'^{1/2}\phi}^2 + \norm{f'^{1/2}A\phi}^2 + \sch{p^* f \nabla^2 r p}_\phi\\
	&\leq C \big(\norm{f^{1/2}r^{-3/2}\phi}^2 + \norm{\psi}_{\B}\norm{\phi}_{\B^*} + \norm{\psi}_{\B}\norm{A\phi}_{\B^*}\big)
	\end{aligned}
	\end{equation}
	where $z\in I_\pm$ (for a fixed sign) and $\phi = R(z)\psi$ for $\psi \in \B$ arbitrary, uniformly in $k\geq 0$. All constants are uniform in such $k$, $\psi$ and $z$ from now on.
	
	It suffices to show that $\norm{\phi}_{\B^*} \leq C \norm{\psi}_\B$. Indeed, noting that $f_k'^{1/2} \geq f_k'^{1/2} F_k \geq F_k 2^{-k/2 - 1}$ for all $k\geq 0$, it would then follow from \eqref{B1} that
	\[
	2^{-k-2}\norm{F_k A\phi}^2 + \sch{p^* f_k \nabla^2 r p}_{\phi} \leq C\big(\norm{\psi}_\B^2 + \norm{A\phi}_{\B^*}\norm{\psi}_\B\big),
	\]
	which concludes the theorem after taking sup over all $k\geq 0$ and applying Young's inequality (perhaps treating the two terms on the left separably).
	
	We turn to the proof that $R(z)$ is uniformly bounded in $\mathcal{L}(\B,\B^*)$ for $z\in I_\pm$. Assume this is not the case, that is
	\begin{equation}\label{B2}
	\norm{\phi_n}_{\B^*} = 1 \text{ for all } n\in \N, \quad \lim_{n\to \infty} \norm{\psi_n}_\B = 0,
	\end{equation}
	 where $\phi_n = R(z_n)\psi_n$ for $(\psi_n) \subseteq \B$ and $(z_n)\subseteq I_\pm$ with a fixed sign. Repeating the argument above then shows that
	 \begin{equation}\label{B3}
	 \sup_{n\in\N} \,\norm{A\phi_n}_{\B^*} <\infty.
	 \end{equation}
	 Considering a subsequence we may assume that $z_n \to \lambda \in \overbar{I_\pm}$, and if $\Im \lambda \neq 0$ then clearly
	 \[
	 \norm{\phi_n}_{\B^*} \leq \abs{\Im z_n}^{-1} \norm{\psi_n}_\B \to 0,
	 \]
	which is an instant contradiction. We therefore assume $\lambda \in I$. By considering a further subsequence we may assume $\phi_n \to \phi$ in the weak-star topology of $\B^*$ since $(\phi_n) \subseteq \B^*$ is uniformly bounded. Then also $\phi_n \to \phi$ weakly in $L^2_{-s}$ for $s>1/2$. The latter convergence is actually strong. Indeed, for $s>t>1/2$, a rewrite 
	\[
	r^{-s}\phi_n = r^{-s}R(i)\psi_n - (i-z_n)(r^{-s}R(i)r^{t})(r^{-t}\phi_n)
	\]
	reveals the strong convergence $r^{-s}\phi_n \to r^{-s}\phi$ in $L^2$ since $r^{-s}R(i)r^t$ is compact and $\psi_n \to 0$. It is then obvious that $(p\phi_n)$ and $(H_0 \phi_n)$ are Cauchy in $L^2_{-s}$ for any $s>1/2$, so 
	\[
	\phi_n \to \phi, \quad p\phi_n \to p\phi, \quad H_0 \phi_n \to H_0\phi
	\]
	in $L^2_{-s}$, where the $p\phi$ and $H_0\phi$ are understood in the distributional sense. In particular $\phi \in H^2_{loc}$, and 
	\[
	\pair{(H-\lambda)\phi}{\rho} = \lim_{n\to\infty} \pair{(H-z_n)\phi_n}{\rho} = 0, \quad \rho \in C^\infty_c,
	\]
	by \eqref{B2}, so $\phi$ is a generalized eigenfunction with eigenvalue $\lambda>0$. 
	
	We now show $\phi \in \B^*_0$ to derive $\phi = 0$ using Rellich's theorem which will lead to our desired contradiction. It follows from \eqref{B2}, \eqref{B3}, and \eqref{B1} with $\phi = \phi_n$ and $\psi = \psi_n$ that
	\[
	2^{-k}\norm{F_k \phi_n}^2 \leq C\big(2^{-k}\norm{r^{-1}\phi_n}^2 + \norm{\psi_n}_{\B}\big) \leq C\big(2^{-k} + \norm{\psi_n}_\B\big),
	\]
	uniformly for $n,k\in \N$, where we employed the bound $0\leq f \leq 2^{-k} r$. Recall $\phi_n \to \phi$ in $L^2_{loc}$ and $\psi_n \to 0$ in $\B$, so taking the limit $n\to \infty$ finally reveals
	\[
	\lim_{k\to \infty} 2^{-k} \norm{F_k \phi}^2 = 0,
	\]
	or equivalently that $\phi \in \B^*_0$. We conclude $\phi = 0$ by \thref{Rellich} which reaches a contradiction, for \eqref{B1} and the strong convergence $\phi_n \to 0$ in $L^2_{-3/2}$ shows  
	\[
	1 = \norm{\phi_n}^2_{\B^*} \leq C \big(\norm{r^{-3/2}\phi_n}^2 + \norm{\psi_n}_\B) \to 0,
	\]
	finishing the proof.
\end{proof}

\begin{proof}[Proof of \thref{W_resolvent}]
	The proof consists of repeated and rather trivial applications of \thref{unif_bound}, \thref{LAPB_lemma}, and the Cauchy-Schwarz and Young inequalities, so we skip the details. The key point is that $W$ appears with positive sign in \thref{LAPB_lemma}. For b), simply apply \thref{LAPB_lemma} with $f = 1$ and $W = W_0 + W_1$ to the states $R(z)\psi$, $z\in I_\pm$ and $\psi \in \B$. In a) and d), use the states $R(z)W_2^{1/2}\psi$, $\psi \in L^2$, and $W = W_0 + W_1 + W_2$ instead. The uniform boundedness of $R(z)W_1^{1/2}$ in $\mathcal{L}(L^2,\B^*)$ follows from b) after taking adjoints. For the last statement in c), take $f = f_k = 1-(1+r/2^k)^{-1}$ and $W = W_0$ applied to the states $R(z)W_1^{1/2}\psi$, $\psi \in L^2$, in \thref{LAPB_lemma}. Using d) we see $\norm{f_k'^{1/2}AR(z)W_1^{1/2}\psi}^2 \leq C\norm{\psi}^2$, restricting the integral region and taking sup over $k\in\N$ shows $AR(z)W_1^{1/2}$ is uniformly bounded in $\mathcal{L}(L^2,\B^*$). An application of a) then finishes the proof.
\end{proof}

\section{Radiation condition and LAP}
In this section we prove \thref{unif_rad,rad_smooth} using a commutator estimate. We then establish \thref{LAP2,Sommerfeld2} as consequences of the radiation condition bounds in a manner similar to that of \cite{ItoSkibsted20}. We impose \thref{cond2} throughout.

\subsection{Radiation condition commutator estimate}
Fix a compact set $I\subseteq (0,\infty)$. For $z = \lambda \pm i \eps$, $\lambda\in I$, $\eps>0$, recall the phase $a = a_z$ from \eqref{a_def} and the corresponding decomposition of $H-z$ in \eqref{A+a}. We shall employ the following easily verified properties of $a$ below:
\begin{equation}\label{a_bound}
E_2 = O(r^{-3} + W_0), \quad \abs{\nabla a} = O(r^{-3} + W_0), \quad \Im(a) \geq 0, \quad \abs{a} \leq C
\end{equation}
uniformly for $z \in I_\pm$. 

The setup for our commutator estimate is similar to that of \thref{LAPB_lemma}. Concretely, we shall this time around consider functions $f = f(r) \in C^1([1,\infty))$ such that
\begin{equation}\label{f_2}
f\geq 0, \quad 0\leq f' \leq \beta r^{-1} f, \quad 0\leq \beta \leq 2
\end{equation}
The restriction that $\beta \leq 2$ is important. We can now handle $W_0$ commutation errors because of \thref{W_resolvent}, so we need no modification $\tilde{f}$ of $f$. Note also that $f$ will not be bounded in applications, which is distinctly different from \thref{LAPB_lemma}.

\begin{lemma}\thlabel{rad_bound}
For any compact set $I\subseteq (0,\infty)$ and any $0\leq\beta\leq2$, there exist constants $C,c>0$ such that the following holds: Let $f$ be a function that satisfies \eqref{f_2}. Then for all $z\in I_\pm$, as forms on $H^2_1(\R^d) = r^{-1}H^2(\R^d)$,
\begin{align*}
2\Im((A-a)^*f(H-z)) &\geq c (A-a)^* f' (A-a) + (2-\beta)p^*f\nabla^2 r p\\
& \quad - CfW_0 - Cp^*fW_0p - Cr^{-3} f - C p^* f r^{-3}p.
\end{align*}
\end{lemma}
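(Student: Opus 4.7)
The plan is to follow the commutator-with-weight-inside scheme of \thref{key1}, but to expand $H-z$ using the refined decomposition \eqref{not 2} (or \eqref{d2} when $d=2$), which factors the dominant part as $(A+a)(A-a)$ and absorbs the remaining pieces into $L + V^{sr} + E_1 + E_2$. In contrast to \thref{key1}, no exponential modification $\tilde f$ is required: the role of the parameter $K$ is now played by $\Im a \geq 0$, which contributes positively out of the factored term. I will split $2\Im((A-a)^* f(H-z))$ into its five summands and treat the first two as dominant and the rest as errors.

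For the leading term, set $B = A-a$ and write $A+a = B + 2a$. A direct computation using $i[A,f] = f'$ and the fact that $fa$ is a multiplication operator yields the identity
\[
2\Im\bigl((A-a)^* f(A+a)(A-a)\bigr) = (A-a)^* f'(A-a) + 2(A-a)^* f(\Im a)(A-a)
\]
as forms on $H^2_1$. Both summands are non-negative, and the first is precisely the leading $c(A-a)^* f'(A-a)$ term with $c = 1$. For the $L$ term, \thref{DL} applied to $2\Im(AfL)$ identifies the main contribution as $p^*(2fr^{-1}-f')Mp$, which is bounded below by $(2-\beta)p^* f\nabla^2 r p$ thanks to $f' \leq \beta r^{-1}f$ and the identity $r^{-1}M = \nabla^2 r$. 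The secondary terms $-\Im(\Delta r\, f' r^{-2}\omega^*p)$ and $-\tfrac{1}{2}\operatorname{div}(fr^{-2}(\Delta r)'\omega)$ from \thref{DL} are estimated via Young's inequality using $\Delta r = O(r^{-1})$, $(\Delta r)' = O(r^{-2})$, giving an admissible error $C(r^{-3}f + p^* fr^{-3}p)$. The companion piece $-2\Im(\bar a fL)$ is handled by commuting $\bar a f$ past $p^* Mp$; the derivatives of $a$ that appear are pointwise $O(W_0 + r^{-3})$ by \eqref{a_bound}, so this term is absorbed into $C(fW_0 + p^* fW_0 p + r^{-3}f + p^* fr^{-3}p)$.

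The remaining pieces $V^{sr}$, $E_1$, $E_2$ are error terms. For $V^{sr}$, the pointwise bound $|V^{sr}|\leq W_0$, combined with $|(A-a)\psi|^2 \leq C(|p\psi|^2 + |\psi|^2)$ (from \eqref{B,exp} and $|a|\leq C$) and Cauchy--Schwarz, yields a bound by $C(fW_0 + p^* fW_0 p)$. For $E_1$, the $A$-part produces $|\omega\cdot\nabla(fE_1)| \leq Cr^{-3}f$ via $E_1 = O(r^{-2})$, $E_1' = O(r^{-3})$, and $f' \leq \beta r^{-1}f$; the $\bar a$-part equals the multiplication operator $2(\Im a)fE_1$, which is non-negative at large $r$ when $d\geq 3$ and, for $d=1$, is bounded in absolute value by $Cfr^{-3}$ since $|E_1| = O(r^{-4})\leq Cr^{-3}$; the case $d=2$ reduces to the $d\geq 3$ situation via the $\tfrac{1}{4}r^{-2}$ correction built into \eqref{d2}. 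Finally $E_2$ is pointwise bounded by $C$ on the compact set $\{r\leq r_\lambda\}$ and by $C(W_0 + r^{-3})$ elsewhere via \eqref{a_bound}, giving an analogous error. The main obstacle is careful bookkeeping of every cross-term produced by the complex phase $a$ (where $(A-a)^* = A - \bar a$): one must verify that each such term contributes only errors of decay $W_0$ or $r^{-3}$, and the constraint $\beta \leq 2$ is exactly what is needed to keep the leading $p^* f \nabla^2 r p$ bound non-negative.
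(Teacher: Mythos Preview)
Your proposal is correct and follows essentially the same route as the paper's proof: expand via \eqref{not 2}/\eqref{d2}, extract $(A-a)^*(|\omega|^2 f' + 2f\Im a)(A-a)$ from the factored piece, use \thref{DL} for $2\Im(AfL)$, and absorb $V^{sr}$, $E_1$, $E_2$ into the $W_0$/$r^{-3}$ error via Cauchy--Schwarz. Two small corrections: first, $i[A,f] = |\omega|^2 f'$ for radial $f$, not $f'$, though the $r^{-2}f' \leq \beta r^{-3}f$ discrepancy is absorbed into the error; second, when you commute $\bar a f$ through $L = p^*Mp$ you also pick up $\nabla f$ (not only $\nabla a$), and it is the identity $M\omega = r^{-2}\omega$ that converts the resulting $p^*Maf'\omega$ term into $p^*\omega r^{-2}f' a$, which is then $O(r^{-3}f)$-controllable --- the paper uses exactly this, and without it that cross term would look like an uncontrolled $O(r^{-1}f)$ contribution.
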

\begin{proof}
The proof is essentially identical to that of \thref{LAPB_lemma} and vastly simpler than \thref{Rellich_lemma}. We again initially consider all computations as forms on $C^\infty_c$. All constants below will only depend on $I$ and $\beta$. Collect errors in the term
\[
Q = r^{-3}f + p^* r^{-3} f p + W_0 f + p^* W_0 f p.
\]
Expand $2\Im((A-a)^*f(H-z))$ according to the decomposition \eqref{A+a} using \thref{DL}, \eqref{a_bound}, and the Cauchy-Schwarz and Young inequalities
\begin{align*}
2\Im((A-a)^*f(H-z)) &= (A-a)^*\big(\abs{\omega}^2 f' + \Im(a)f\big)(A-a)\\
&\quad + 2\Im((A-a)^*f(V^{sr}+E_2))+ 2\Im((A-a)^*fL)\\
&\geq (A-a)^* f' (A-a) + p^*(2r^{-1}f - f')Mp\\
&\quad + 2\Im(Lfa) - CQ\\
&\geq (A-a)^* f' (A-a) + (2-\beta)p^*f \nabla^2 r p\\
&\quad + 2\Im(Lfa) - CQ.
\end{align*}
Making the trivial bound
\begin{align*}
2\Im(Lfa) &= 2\Im(p^*Mfa p) - 2\Re(p^* M \nabla(fa))\\
&= 2p^*Mf \Im(a) p - 2 \Re(p^*\omega f' r^{-2} a) - 2 \Re(p^*M (\nabla a) f)\\
&\geq - CQ
\end{align*}
establishes the lemma as forms on $C^\infty_c$. Noting that $f = O(r^2)$ by Grönwall's inequality, the bound naturally extends to $H^2_1(\R^d)$ by density, giving the claim.
\end{proof}

\subsection{Proof of radiation condition bounds}

\begin{proof}[Proof of \thref{unif_rad}]
Suppose the function $h$ satisfies \eqref{h req} with the constant $\beta_0 <1$ appearing there. Consider the family of weights
\[
f = f_k(r) = h^2(r)\Theta_k^{\beta_1}(r),\quad \Theta_k =1-\big(1+\tfrac{r}{2^k}\big)^{-1},  \quad k\geq 0,
\]
where $\beta_1 >0 $ is chosen such that $2\beta_0 + \beta_1 <2$. Clearly $f\in C^1([1,\infty))$ with 
\[
0\leq f' = 2h h' \Theta^{\beta_1} + \beta_1 h^2 \Theta^{\beta_1} r^{-1} (1-\Theta) \leq (2\beta_0 + \beta_1) r^{-1} f
\]
for any $k\geq 0$. Note that $h^2(r^{-3}+W_0)$ satisfies \eqref{W_general} by assumption, that is $h^{2}(r^{-3} + W_0)\in L^1(dr)\cap o(r^{-1})$. It thus follows from \thref{rad_bound} and \thref{W_resolvent} that
\begin{align*}
\MoveEqLeft \norm{f'^{1/2}(A-a)R(z)\psi}^2 + \sch{p^*f \nabla^ 2p}_{R(z)\psi}\\
&\leq C\big(\norm{(W_0+r^{-3})^{1/2}f^{1/2}R(z)\psi}^2 + \norm{(W_0+r^{-3})^{1/2}f^{1/2}pR(z)\psi}^2\\
&\quad + \norm{f^{1/2}\psi}_\B \norm{f^{1/2}(A-a)R(z)\psi}_{\B^*}\big)\\
&\leq C\big(\norm{h \psi}_\B^2 + \norm{h\psi}_\B \norm{h(A-a)R(z)\psi}_{\B^*}\big),
\end{align*}
uniformly for $k\geq 0$, $z\in I_\pm$, and $\psi \in C^\infty_c$. Here we employed the trivial bound $\Theta \leq 1$. Make the bound $f_k'^{1/2} \geq c 2^{-k/2} F_k h$ and take sup over all $k\geq 0$ (like in the proof of \thref{unif_bound}) to conclude
\begin{align*}
\MoveEqLeft \norm{h(A-a)R(z)\psi}_{\B^*} + \sch{p^* h^2 \nabla^ 2p}_{R(z)\psi}\\
&\leq C\big(\norm{h \psi}_\B^2 + \norm{h\psi}_\B \norm{h(A-a)R(z)\psi}_{\B^*}\big).
\end{align*}
It then finally follows from Young's inequality that
\[
\norm{h(A-a)R(z)\psi}_{\B^*}^2+\sch{p^*h^2 \nabla^2 r p}_{R(z)\psi} \leq C \norm{h\psi}^2_\B
\]
for some $C>0$ uniformly for $\psi \in C^\infty_c$. Here it is important that $h(A-a)\psi \in \B^*$, but this is obvious since $\psi \in C^\infty_c$. Formally commutating the factor $h$ in $h(A-a)R(z)$ to the right (using the $C^1$ condition on $h$) shows that the bound extends to all $\psi \in h^{-1}\B$, finishing the proof. 
\end{proof}

\begin{proof}[Proof of \thref{rad_smooth}]
The proof is similar in spirit to that of \thref{W_resolvent} (using \thref{rad_bound} instead of \thref{LAPB_lemma}), so we skip some details. The assertion a) follows by considering \thref{rad_bound} with weight $f = h^2 \exp(\alpha\int_1^r h^2W_2 + W_1\, ds)$ where $\alpha>0$ is sufficiently small such that $0\leq f' \leq 2r^{-1}f$, evaluating in $R(z)W_2^{1/2}\psi$ for $\psi \in L^2$, and using the two-sided estimates from \thref{W_resolvent} to bound some terms. The statement b) follows from a) by considering $f = h^2 \Theta^{\beta_1}$ like in the proof of \thref{unif_rad} above, again evaluating the forms in $R(z)W_2^{1/2}\psi$. We consider the proof done.
\end{proof}

\subsection{Proof of the LAP and corollaries}
\begin{proof}[Proof of \thref{LAP2}]
The structure of the proof is similar to that of \cite[Corollary 1.11]{ItoSkibsted20}. We first establish the bound \eqref{unif cont}. Let $z,z'\in I_\pm$ for some fixed sign $\pm$, and take $n\in\N$. Decompose
\begin{equation}\label{eq_lap1}
\begin{aligned}
R(z)-R(z') &= \chi_n R(z)\chi_n - \chi_n R(z')\chi_n + \big(R(z)-\chi_nR(z)\chi_n\big)\\
&\quad - \big(R(z')-\chi_nR(z')\chi_n\big).
\end{aligned}
\end{equation}
We first bound the third and fourth terms above. By \thref{W_resolvent}:
\begin{align*}
\norm{W^{1/2}\big(R(z)-\chi_nR(z)\chi_n\big)W^{1/2}} &\leq \norm{W^{1/2}R(z)W^{1/2}(1-\chi_n)}\\
&\quad + \norm{(1-\chi_n)W^{1/2}R(z)\chi_nW^{1/2}} \\
&\leq C \norm{h^{-1}(1-\chi_n)}_\infty
\end{align*}
for a constant $C>0$ uniform in $n\in \N$ and $z, z' \in I_\pm$. All constants are uniform in these parameters from now on. We can bound the fourth term in \eqref{eq_lap1} similarly, whence
\begin{align*}
\MoveEqLeft \norm{W^{1/2}\big(R(z)-\chi_nR(z)\chi_n\big)W^{1/2}} + \norm{W^{1/2}\big(R(z')-\chi_nR(z')\chi_n\big)W^{1/2}}\\
&\leq C \norm{(1-\chi_n)h^{-1}}_\infty.
\end{align*}

For the first and second term in \eqref{eq_lap1}, rewrite
\begin{equation}\label{eq_lap2}
\begin{aligned}
\chi_n R(z)\chi_n - \chi_n R(z')\chi_n &= \chi_nR(z)\big(\chi_{n+1}(H-z') - (H-z)\chi_{n+1}\big)R(z')\chi_n\\
&= (z-z') \chi_n R(z) \chi_{n+1} R(z')\chi_n\\
&\quad - \chi_n R(z) [H_0,\chi_{n+1}]R(z')\chi_n.
\end{aligned}
\end{equation}
We bound each term separably. For the first, further decompose
\begin{align*}
\chi_{n+1} &= (A-a_{\overbar{z}})^*\chi_{n+1}(a_z + a_{z'})^{-1} - \chi_{n+1} (a_z+a_{z'})^{-1}(A-a_{z'})\\
&\quad-[A,\chi_{n+1}(a_z+a_{z'})^{-1}].
\end{align*}
Note here that $\overbar{a_{\overbar{z}}} = -a_z$ and that $(a_z+a_{z'})^{-1}$ is uniformly bounded in $I_\pm$ with $\abs{\nabla (a_z+a_{z'})^{-1}} \leq C W_0 + C r^{-3}$. It thus follows from \thref{rad_smooth,W_resolvent} that
\[
\norm{W^{1/2}(z-z')\chi_nR(z)\chi_{n+1}R(z')\chi_n W^{1/2}} \leq C\abs{z-z'}\big(\norm{h^{-1}\chi_{n+1}}_{\mathcal{L}(\B^*,\B)} + 1 \big). 
\]
Here we used that $\norm{\chi_{n+1}'}_{\mathcal{L}(\B^*,\B)} \leq C$. The second term in \eqref{eq_lap2} is similar. This time decompose
\begin{align*}
-[H_0,\chi_{n+1}] &= 2i \Re(\chi_{n+1}'A)\\
&= i\big(\chi_{n+1}'(A-a_{z'}) + (A-a_{\overbar{z}})^*\chi_{n+1}'-(a_z-a_{z'})\chi_{n+1}'\big).
\end{align*}
Bounding $\abs{a_z-a_{z'}} \leq C\abs{z-z'}$, it follows after repeating the argument above that
\[
\norm{W^{1/2}\chi_nR(z)[H_0,\chi_{n+1}]R(z')\chi_n W^{1/2}} \leq C\big( \abs{z-z'} + \norm{h^{-1}\chi_{n+1}'}_{\mathcal{L}(\B^*,\B)}\big).
\]
Combining estimates we conclude
\begin{align*}
\norm{W^{1/2}(R(z)-R(z'))W^{1/2}} &\leq C \big( \norm{(1-\chi_n)h^{-1}}_\infty + \norm{\chi_{n+1}'h^{-1}}_{\mathcal{L}(\B^*,\B)}\big)\\
&\quad + C \abs{z-z'} \big(1+\norm{\chi_{n+1}h^{-1}}_{\mathcal{L}(\B^*,\B)}\big).
\end{align*}
We now make the bound more explicit. To this end, note that $h$ is increasing and obeys
\[
(s/t)^{\beta_0} \leq h(s)/h(t)\leq (t/s)^{\beta_0} \quad 1\leq s\leq t,
\]
valid since $0\leq h' \leq \beta_0 h r^{-1}$ with $\beta_0<1$. Thus
\[
\norm{(1-\chi_n)h^{-1}}_\infty \leq \frac{C}{h(2^n)},\quad \norm{\chi_{n+1}'h^{-1}}_{\mathcal{L}(\B^*,\B)} \leq \frac{C}{h(2^n)},
\]
and a Cauchy condensation type argument shows 
\[
\norm{\chi_{n+1}h^{-1}}_{\mathcal{L}(\B^*,\B)} \leq C\big( 1 +  \int_1^{2^n} h^{-1}\, dr\big) \leq C \frac{2^n}{h(2^n)}. 
\]
We therefore conclude:
\[
\norm{W^{1/2}(R(z)-R(z'))W^{1/2}} \leq \frac{C}{h(2^n)} \big(1+2^n\abs{z-z'}\big).
\]
Now suppose $\abs{z-z'}\leq 1$. Take $n\in \N$ such that $2^n \leq \abs{z-z'}^{-1}\leq 2^{n+1}$. Then also $h(2^n) \leq h(\abs{z-z'}^{-1}) \leq C h(2^n)$, whence
\[
\norm{W^{1/2}(R(z)-R(z'))W^{1/2}} \leq  \frac{C}{h(\abs{z-z'}^{-1})},
\]
which finishes the proof of \eqref{unif cont}.

We now prove the LAP. To this end, fix $\psi \in \B$ and $\lambda\in (0,\infty)$, and take an open interval $I$ containing $\lambda$ with closure in $(0,\infty)$. For a fixed sign $\pm$, it suffices to show that the map
\[
\Phi_\pm: I_\pm \ni z \to R(z)\psi \in \B^*
\]
is uniformly continuous where $\B^*$ is equipped with the weak-star topology. Indeed, $\Phi_\pm$ maps into a compact (completely) metrizable space by \thref{unif_bound}, so standard extension arguments applies. By uniform boundedness it therefore suffices to show that
\[
\pair{(R(z)-R(z'))\psi}{\rho} \to 0 \text { as } \abs{z-z'}\to 0
\]
for all $\rho\in C^\infty_c$, where $z$, $z'$ are taken in $I_\pm$. Consider the function
\[
W_\psi = \sum_{k=1}^\infty 2^{-k/2} \max\{\norm{F_k \psi}, 2^{-k}\} F_k,
\]
and note $W_\psi$ is chosen exactly such that $W_\psi$ satisfies \eqref{W_general} and $W_\psi^{-1/2}\psi \in L^2$. Using the proved statement \eqref{unif cont} with this $W_\psi$ and a suitable function $h(r)\to \infty$ as $r\to \infty$, we conclude
\[
\abs{\pair{(R(z)-R(z'))\psi}{\rho}} \leq \norm{W_\psi^{-1/2}\rho} \norm{W_\psi^{-1/2}\psi} \norm{W^{1/2}_\psi(R(z)-R(z'))W_\psi^{1/2}} \to 0
\]
as $\abs{z-z}\to 0$, for any $\rho\in C^\infty_c$, which finishes the proof.
\end{proof}

\begin{proof}[Proof of \thref{Sommerfeld2}]
Fix a function $h$ that satisfies \eqref{h req}, $\lambda\in (0,\infty)$, $\psi \in h^{-1}\B$, and a sign $\pm$. We first verify that $R(\lambda\pm i0)\psi$ obeys a) and b) of the theorem. Using \thref{unif_bound,LAP2}, a) follows easily. For b), take a function $h_0 =h_0(r)$ so that $hh_0$ satisfies \eqref{h req}, $h_0(r)\to \infty$ as $r\to \infty$, and $h_0\psi \in h^{-1}\B$. We remark that this is always possible. Applying \thref{unif_rad} and taking a weak-star limit $\eps\to 0^+$ we see $h(A-a)R(\lambda\pm i 0)\psi\in h_0^{-1}\B^*\subseteq \B^*_0$, showing b).

Conversely, suppose $\phi_\pm$ satisfies a) and b) of the theorem. Consider $\phi'_\pm = \phi_\pm - R(\lambda \pm i0)\psi$, and note $\phi_\pm'$ is a generalized eigenfunction with eigenvalue $\lambda$ by a). We show $\phi_\pm' \in \B^*_0$ to conclude $\phi_\pm= R(\lambda\pm i0)\psi$ by \thref{Rellich}. To this end, note
\[
\Im(\chi_n(H-\lambda)) = \Re((A-a)\chi_n') + \chi_n' \Re(a)
\]
for all $n \in \N$ as forms on $H^2_{loc}$. The left hand side vanishes on $\phi_\pm'$, so
\[
0 \leq \sch{\mp \chi_n' \Re(a)}_{\phi'_\pm} = \sch{\pm \Re((A-a)\chi_n')}_{\phi_\pm'} \leq \norm{h^{-1}\phi_\pm'}_{\B^*} \norm{\chi_n'h (A-a)\phi_\pm'}_\B,
\]
and the right hand side goes to $0$ as $n\to \infty$ since $h(A-a)\phi_\pm' \in \B^*_0$ by assumption. We conclude $\phi'_\pm\in \B^*_0$ and therefore $\phi_\pm = R(\lambda \pm i 0)\psi$ which finishes the proof.
\end{proof}

\begin{acknowledgements} 
Part of this work was done during my time as a master's student at Aarhus University. I would like to express my deepest gratitude to Erik Skibsted for introducing me to this topic, his patience and guidance, and for sharing his invaluable insight into scattering theory and mathematics as a whole. This work was partially supported by the VILLUM Centre of Excellence for the Mathematics of Quantum Theory (QMATH) (grant No. 10059).
\end{acknowledgements}

\bibliographystyle{mscplain}
\bibliography{New_sources}

\end{document}